\newtheorem{definition}{Definition}
\newtheorem{theorem}{Theorem}
\newtheorem{lemma}{Lemma}
\newtheorem{corollary}{Corollary}
\newcommand{\cmark}{\ding{51}}%
\newcommand{\xmark}{\ding{55}}%
\title{\bfseries Versatile and Fast Location-Based Private Information Retrieval with Fully Homomorphic Encryption over the Torus}
\author{
Joon Soo Yoo$^{1}$, Taeho Kim$^{2}$, and Ji Won Yoon$^{1}$\\[0.5em]
\small $^1$School of Cybersecurity, Korea University, Seoul, Republic of Korea\\
\small $^2$Institute of ICT Planning and Evaluation (IITP), Daejeon, Republic of Korea
}
\begin{document}
\date{} 
\maketitle

\begin{abstract}
Location-based services often require users to share sensitive locational data, raising privacy concerns due to potential misuse or exploitation by untrusted servers. In response, we present \textsf{VeLoPIR}, a versatile location-based private information retrieval (PIR) system designed to preserve user privacy while enabling efficient and scalable query processing. \textsf{VeLoPIR} introduces three operational modes---interval validation, coordinate validation, and identifier matching---that support a broad range of real-world applications, including information and emergency alerts. To enhance performance, \textsf{VeLoPIR} incorporates multi-level algorithmic optimizations with parallel structures, achieving significant scalability across both CPU and GPU platforms. We also provide formal security and privacy proofs, confirming the system’s robustness under standard cryptographic assumptions. Extensive experiments on real-world datasets demonstrate that \textsf{VeLoPIR} achieves up to 11.55$\times$ speed-up over a prior baseline. The implementation of \textsf{VeLoPIR} is publicly available at \url{https://github.com/PrivStatBool/VeLoPIR}.
\end{abstract}

% \begin{keywords}
% location-based private information retrieval, TFHE, post-quantum security, emergency alert systems, information alert systems, parallel processing
% \end{keywords}

\section*{Acknowledgments}
This work was supported by the Institute of Information \& Communications Technology Planning \& Evaluation (IITP) grant funded by the Korean government (MSIT) (No. RS-2024-00460321, Development of Digital Asset Transaction Tracking Technology to Prevent Malicious Financial Conduct in the Digital Asset Market).

\section{Introduction}

In today’s digital landscape, location-based services (LBS) have become deeply embedded in everyday life—helping users discover nearby amenities, receive traffic updates, or get localized alerts. While these services offer convenience, they often rely on collecting and storing users’ precise geographic locations. This raises serious concerns about privacy. In many cases, the location data is gathered by large corporations or government agencies and may be repurposed or even sold to third parties, such as advertisers or data brokers.

Regulations such as the GDPR and national privacy laws are designed to prohibit unauthorized tracking of users’ location data. However, numerous investigations have revealed that major companies have violated these protections in practice~\cite{nyt2018,gravy2025}. Reports show that users’ location histories have been used to infer sensitive personal information, enabling long-term tracking, profiling, and even unauthorized sales to third parties. As a representative example, in 2025, Google reached a \$1.375 billion settlement with the state of Texas after being accused of secretly collecting users’ geolocation data, biometric identifiers, and incognito search history without consent—one of the largest privacy settlements in U.S. history~\cite{google2025}.

In conventional client-server communication, the reason the server can access user data is that both parties typically share a secret key established through a protocol such as Diffie-Hellman (DH)~\cite{diffie2022} or Elliptic Curve Diffie-Hellman (ECDH)~\cite{menezes2018}. While this ensures that data is encrypted during transmission, it does not prevent the server itself from decrypting and inspecting the data. As a result, although the server may correctly execute the user’s requested query, it retains full access to the plaintext, creating opportunities for misuse or unauthorized data collection. This inherent trust assumption poses a significant privacy risk in scenarios where the server cannot be fully trusted.

A promising cryptographic solution to prevent unauthorized access by the server—while still allowing it to process user queries—is homomorphic encryption (HE)~\cite{he}. Unlike traditional encryption methods, HE does not require a key exchange protocol such as Diffie-Hellman, because it allows computations to be performed directly on encrypted data without ever decrypting it. As a result, HE ensures data remains secure not only in transit but also from the server itself. In response to this capability, extensive research has been conducted on private information retrieval (PIR)~\cite{pir_gen, pir_pre_1, pir_pre_2, pir_sup, pir_sup2}, with a particular focus on minimizing computational overhead for both the client and server while ensuring that the server processes queries without learning their content.

%%%%%%%%%%%

Our work, \textsf{VeLoPIR}, falls within the family of PIR protocols, and more specifically, addresses the problem of location-based PIR. In this setting, a client can query information from a server based on their location, while ensuring that their geographic coordinates remain completely hidden from the server. \textsf{VeLoPIR} is built on \textsf{TFHE} (Fully Homomorphic Encryption over the Torus)~\cite{tfhe-1, tfhe-2}, a logic-gate-based FHE scheme well-suited for shallow~\footnote{In TFHE, “shallow circuits” refer to logic circuits with a relatively small number of sequential gates. For instance, \textsf{VeLoPIR} circuits typically have an estimated depth of around 40–80.}, non-linear circuits. Unlike multi-server~\cite{jain-two}, our single-server design ensures strong privacy guarantees while minimizing the client’s computational burden.

\textsf{VeLoPIR} introduces three operational modes—Interval Validation (\textsf{IntV}), Coordinate Validation (\textsf{CoV}), and Identifier Matching (\textsf{IdM})—each designed to support different types of location-based queries under strong privacy guarantees. While a simplified version of \textsf{IntV} has been explored in earlier work~\cite{yoo-torus}, \textsf{VeLoPIR} generalizes the model into a modular framework, extends it with additional query modes (\textsf{CoV} and \textsf{IdM}), and introduces systematic algorithmic and parallel optimizations for scalability. Importantly, \textsf{VeLoPIR} also provides formal security proofs and experimental validation using large-scale, real-world datasets, which were not addressed in prior designs.

\begin{figure}[htb]
    \centering
    \includegraphics[width=0.85\linewidth]{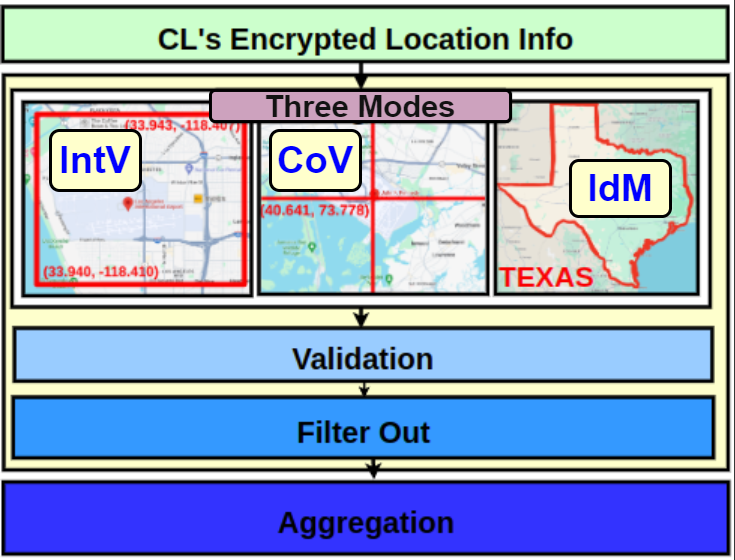}
    \caption{Overview of \textsf{VeLoPIR}'s Structure.}
    \label{fig:velo_overview}
\end{figure}

For a simplified illustration of \textsf{VeLoPIR}, the server processes a client’s (\textsf{CL}) encrypted locational data (see Fig.~\ref{fig:velo_overview}). At its core, \textsf{VeLoPIR} utilizes the operational modes \textsf{IntV}, \textsf{CoV}, and \textsf{IdM} as validation mechanisms for encrypted user locations, filtering out unrelated services. Each mode supports a distinct query type: \textsf{IntV} checks whether the location falls within a geographic interval, \textsf{CoV} validates exact coordinate matches, and \textsf{IdM} performs symbolic location matching based on identifiers such as city names or postal codes. The filtered result will be either the encryption of 0 or the encryption of the requested service, which is then aggregated to form the encrypted response. This response is returned to \textsf{CL}, who alone holds the secret key to decrypt the result. Throughout this process, the server has no access to the query result or any intermediate computations, thanks to FHE’s capability for computation on encrypted data.

In summary, the contributions of our work are as follows:
\begin{itemize}
    \item We introduce \textsf{VeLoPIR}, a versatile location-based PIR system that supports a broad range of real-world data types through three operational modes: \textsf{IntV}, \textsf{CoV}, and \textsf{IdM}, each preserving user locational privacy under different query structures.

    \item We design parallelized algorithms for \textsf{VeLoPIR} that exploit both CPU and GPU resources, enabling scalable and efficient evaluation over large datasets.

    \item We evaluate \textsf{VeLoPIR} using real-world datasets in both information and emergency alert scenarios, demonstrating significant improvements in query efficiency and practical applicability.

    \item We provide formal correctness proofs and conduct comprehensive security and privacy analysis, ensuring \textsf{VeLoPIR}'s robustness in protecting location privacy.

    \item Our system achieves up to 11.55$\times$ speed-up over baseline approaches for location-specific queries on the \texttt{covid-usa} dataset, reducing the query time from 113.44 seconds to 9.83 seconds.\footnote{The baseline is derived from the prior \textsf{LocPIR} framework~\cite{yoo-torus}.}
\end{itemize}

\subsection{Related Works}

% ------ rel works table
\begin{table*}[!htb]
\centering
\caption{Comparison of Location-Based PIR Schemes (PQ Secure indicates post-quantum security).}
\begin{tabularx}{\textwidth}{|>{\centering\arraybackslash}X||c|c|c|c|c|c|}
\hline
\textbf{Scheme} & \textbf{Single Server} & \textbf{Offline (Client)} & \textbf{PQ Secure} & \textbf{Preprocessing} & \textbf{Location Specific} & \textbf{HE Scheme} \\ \hline
Lin et al.~\cite{lin-double} & \cmark & \xmark & \cmark & \cmark & \xmark & \textsf{BV} \\ \hline
Jain et al.~\cite{jain-two} & \xmark & \xmark & \xmark & \cmark & \xmark & \textsf{ElGamal/Paillier} \\ \hline
An et al.~\cite{an-covid} & \cmark & \cmark & \cmark & \cmark & \cmark & \textsf{BFV} \\ \hline
% \textsf{LocPIR}~\cite{yoo-torus} & \cmark & \xmark & \cmark & \xmark & \cmark & \textsf{TFHE} \\ \hline
\textsf{VeLoPIR} \textbf{(Ours)} & \cmark & \xmark & \cmark & \xmark & \cmark & \textsf{TFHE} \\ \hline
\end{tabularx}
\label{tab:related_works_tab}
\end{table*}

Our work can be viewed as a specialized application of private information retrieval (PIR)~\cite{pir_gen}, focusing specifically on location-based queries. Traditional single-server PIR schemes often rely on server-side preprocessing~\cite{pir_pre_1, pir_pre_2} to achieve sublinear query time. However, such preprocessing is tightly coupled to the structure of the database and must be repeated whenever updates occur, limiting flexibility. Other approaches~\cite{pir_offline} introduce an offline phase to generate cryptographic hints that reduce the client's online cost, but these methods still incur significant overhead during setup. More importantly, existing PIR schemes generally do not address the unique challenges of location-based retrieval. We summarize relevant location-based PIR methods in Table~\ref{tab:related_works_tab}.

Lin et al.\cite{lin-double} focus on efficient PIR for large datasets using preprocessing and the BV scheme\cite{bv} from FHE. Their work primarily addresses the theoretical aspects of reducing computational complexity in general PIR, making it distinct from our \textsf{VeLoPIR}, which is optimized for location-specific queries in smaller, practical datasets.

Jain et al.~\cite{jain-two} propose a location-based recommendation service using a dual-serve model, leveraging a hybrid encryption approach with ElGamal~\cite{elgamal} and Paillier~\cite{paillier} schemes to preserve user privacy. Their system offloads the majority of computation to the two servers, thereby minimizing the client's computational burden. However, it relies on the assumption that the servers do not collude. As the dataset size increases, the recommendation time grows significantly, reaching 4355 seconds for 5000 elements.

An et al.~\cite{an-covid} address location-based private information retrieval in a single-server setup for COVID-19 alerts using the BFV scheme~\cite{bfv}. While their approach enables proximity calculations on encrypted data, it imposes offline computational overhead on the user's device. Moreover, the patient's locational privacy is not preserved, as raw location data is shared with the authorities, who can also access the patient's contact history, potentially exposing auxiliary information; and the overall computation time is approximately 399 seconds.

% \textsf{LocPIR}~\cite{yoo-torus}, the most closely related work, also utilizes the \textsf{TFHE}~\cite{tfhe-1} scheme and leverages non-linear comparison operations such as less-than comparisons for location validation in a single-server setting. However, the approach lacks the optimizations necessary for practical application, making it inefficient for real-world use. \textsf{LocPIR} does not account for the structure of many real-world datasets, which often include city or state-specific locational identifiers. While it provides a foundational work, the absence of optimizations and adaptability to practical datasets limits its practicality compared to our \textsf{VeLoPIR} approach.

\section{Background} 
\label{sec:background}
We use bold uppercase letters to denote matrices (e.g., $\mathbf{S}$) and bold lowercase letters to indicate vectors (e.g., $\mathbf{s}$). Scalars are represented by italic letters (e.g., $s$). When referring to the binary representation of a scalar, we use bracket notation (e.g., $s[i]$), where $s[i]$ denotes the $i$-th bit of the binary representation of the scalar $s$. A summary of the notations used is provided in Appendix~\ref{appendix:notation} for reference.

\subsection{Fully Homomorphic Encryption (FHE)}

Homomorphic encryption enables computations on encrypted data without decryption, allowing a function $f$ to be evaluated on encrypted inputs. Given ciphertexts $\mathsf{Enc}_{\mathsf{sk}}(x)$ and $\mathsf{Enc}_{\mathsf{sk}}(y)$, the evaluation algorithm $\mathsf{Eval}$ produces an encrypted result that, when decrypted, matches $f(x, y)$:  
\begin{equation*}
    \mathsf{Dec}_\mathsf{sk}(\mathsf{Eval}_f(\mathsf{Enc}_\mathsf{sk}(x), \mathsf{Enc}_ \mathsf{sk}(y), \mathsf{evk})) = f(x, y).   
\end{equation*}
Only the holder of the secret key $\mathsf{sk}$ can decrypt the result, while the encrypted inputs remain secure under the hardness of cryptographic assumptions, such as lattice-based problems.

Several FHE schemes exist, including quantum-resistant options  NTRU~\cite{ntru} and Learning With Errors (LWE)-based schemes~\cite{lwe-2009, lwe-2}, both of which rely on the hardness of lattice problems. Although LWE-based schemes introduce noise that limits the depth of evaluations, this limitation is overcome by Gentry’s bootstrapping technique~\cite{gentry-fhe}, which reduces noise, enabling further evaluations on the ciphertext. In our work, we employ the \textsf{TFHE} scheme~\cite{tfhe-1}, an LWE-based homomorphic encryption approach optimized for fast bootstrapping and efficient homomorphic logic gate evaluation. While arithmetic-based schemes like CKKS~\cite{ckks} are effective for low-depth arithmetic circuits, \textsf{TFHE} excels in handling shallow circuits and nonlinear operations, making it particularly suited for our work.

\subsection{Torus Fully Homomorphic Encryption}

\textsf{TFHE} is an LWE-based encryption scheme that operates over the torus $\mathbb{T} = \mathbb{R}/\mathbb{Z}$. It uses multiple types of ciphertexts ($\mathsf{TLWE}$, $\mathsf{TRLWE}$, and $\mathsf{TRGSW}$) to enable efficient homomorphic operations, including the construction of logical gates. For the purpose of this paper, we specifically focus on the $\mathsf{TLWE}$ ciphertext $(\mathbf{a}, b)$. 

\begin{definition}[TLWE Problem]
Let \( n \in \mathbb{N} \) be a positive integer, and let \( \mathbf{s} = (s_1, \dots, s_n) \in \mathbb{B}^n \) be a secret vector where each \( s_i \) is sampled uniformly from the binary space \( \mathbb{B} = \{0, 1\} \). Let \( \chi \) be a Gaussian error distribution over the torus \( \mathbb{T} = \mathbb{R}/\mathbb{Z} \). The Torus Learning with Errors (TLWE) problem is the task of distinguishing between samples drawn from the following two distributions:
\begin{align*}
    \mathcal{D}_0 &= \{(\mathbf{a}, r) \mid \mathbf{a} \stackrel{\$}{\leftarrow} \mathbb{T}^n, r \stackrel{\$}{\leftarrow} \mathbb{T} \}, \\
    \mathcal{D}_1 &= \left\{(\mathbf{a}, b) \mid \mathbf{a} \stackrel{\$}{\leftarrow} \mathbb{T}^n, b = \langle \mathbf{a}, \mathbf{s} \rangle + e \mod 1, e \leftarrow \chi \right\}.
\end{align*}
\end{definition}

\begin{definition}[Advantage of an Adversary]
The advantage of an adversary \( \mathcal{A} \) in distinguishing between the two distributions \( \mathcal{D}_0 \) and \( \mathcal{D}_1 \) in the TLWE problem is defined as:
\begin{align*}
    \text{Adv}_{\text{TLWE}}^{\mathcal{A}} &= \bigg| \Pr[\mathcal{A}(\mathbf{a}, r) = 1 \mid (\mathbf{a}, r) \leftarrow \mathcal{D}_1] \\
    &\quad - \Pr[\mathcal{A}(\mathbf{a}, r) = 1 \mid (\mathbf{a}, r) \leftarrow \mathcal{D}_0] \bigg|.
\end{align*}
\end{definition}

The \textsf{TLWE} problem is said to be \( \lambda \)-secure if, for any probabilistic polynomial-time ($\mathsf{PPT}$) adversary \( \mathcal{A} \), the advantage \( \text{Adv}_{\text{TLWE}}^{\mathcal{A}} \) is at most \( 2^{-\lambda} \).

Based on the security of the $\mathsf{TLWE}$ ciphertext (similar to other $\mathsf{TRLWE}$ and $\mathsf{TRGSW}$ ciphertexts), the general $\mathsf{TLWE}$ encryption scheme is outlined as follows:
\begin{itemize}
    \item $\mathsf{KeyGen}(1^\lambda)$: Given a security parameter $\lambda$, the key generation algorithm defines $\mathsf{TLWE}$ parameters $n$ and $\sigma$, and outputs keys: a secret key $\mathbf{s} \stackrel{\$}{\leftarrow} \mathbb{B}^n$ and an evaluation key set $\mathsf{evk}$, which includes a bootstrapping key and a key switching key.
    \item $\mathsf{Enc}_\mathbf{s}(m)$: Given a binary message $m \in \mathbb{B}$, it is encoded to a plaintext $\mu \in \mathbb{T}$ using the function $\mathsf{Ecd}: m \mapsto m/4 - 1/8$. Next, the algorithm generates a masking vector $\mathbf{a}$ by uniformly sampling from the set $U(\mathbb{T}^n)$. The plaintext message $\mu$ is then encrypted using the secret key $\mathbf{s}$, resulting in a $\mathsf{TLWE}$ sample denoted as $\mathsf{ct} = (\mathbf{a}, b)$ where $b = \langle \mathbf{a}, \mathbf{s} \rangle + \mu + e$. Here, $e$ is a noise term drawn from a Gaussian distribution $N(0, \sigma)$.
    \item $\mathsf{Dec}_\mathbf{s}(\mathsf{ct})$: The decryption algorithm calculates the phase of the ciphertext $\mathsf{ct} = (\mathbf{a}, b)$ by $\varphi_{\mathbf{s}}(\mathsf{ct}) = b - \langle \mathbf{a}, \mathbf{s} \rangle$. The resulting phase $\mu + e$ is then rounded to the nearest plaintext message from the set $\{-1/8, 1/8\}$. Applying the inverse function of $\mathsf{Ecd}$ to the obtained plaintext message allows us to recover the original message bit $m \in \mathbb{B}$. Note that the error term $e$ must satisfy $|e| < 1/16$ to ensure correct decryption.
\end{itemize}

% ------- pre-eval phase figures
\begin{figure*}[htb]
    \centering
    
    \begin{subfigure}{0.48\textwidth}
        \centering
        \resizebox{\textwidth}{!}{ % Adjust the scaling as needed
            \includegraphics{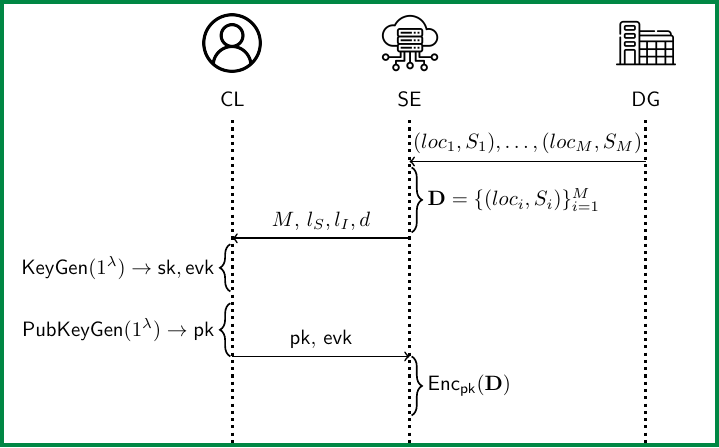} % Include the preprocess diagram
        }
        \caption{Preprocessing Phase}
        \label{fig:preprocessing}
    \end{subfigure}
    \hfill
    \begin{subfigure}{0.48\textwidth}
        \centering
        \resizebox{\textwidth}{!}{ % Adjust the scaling as needed
            \includegraphics{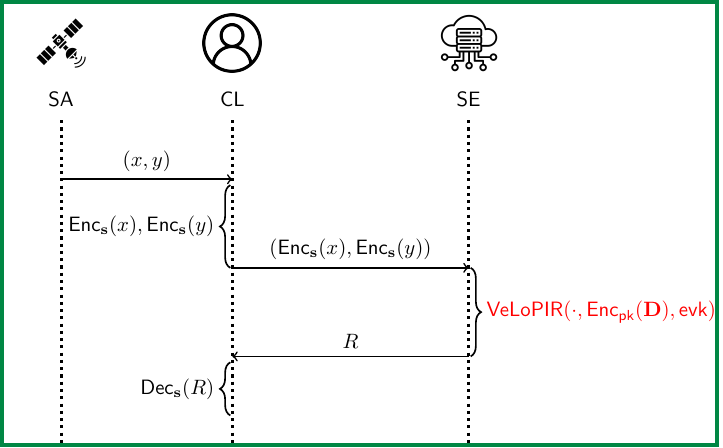} % Include the evaluation diagram
        }
        \caption{Evaluation Phase}
        \label{fig:evaluation}
    \end{subfigure}
    
    \caption{Preprocessing and Evaluation Phases of the \textsf{VeLoPIR} Protocol}
    \label{fig:VeLoPIR_phases}
\end{figure*}

\textsf{TFHE} logic gates are designed to refresh, or bootstrap, the noise with each logical operation. The bootstrapping ($\mathsf{Bootstrap}$) process in \textsf{TFHE} involves complex procedures such as $\mathsf{BlindRotate}$, $\mathsf{SampleExtract}$, and $\mathsf{KeySwitch}$. This bootstrapping phase typically consumes most of the time and memory during circuit evaluation, as it involves looking up elements in a precomputed $\mathsf{TRLWE}$ ciphertext table (see~\cite{tfhe-1, tfhe-2} for more details).

The construction of logical gates in \textsf{TFHE} utilizes the bootstrapping technique as described previously. Specifically, we demonstrate the homomorphic evaluation of the AND operation as follows:
\[
\mathsf{HomAND}(\mathsf{ct}_1, \mathsf{ct}_2, \mathsf{evk}) = \mathsf{Bootstrap}\left((\mathbf{0}, -\frac{1}{8}) + \mathsf{ct}_1 + \mathsf{ct}_2\right).
\]
Assuming that the magnitude of the errors in both \(\mathsf{ct}_1\) and \(\mathsf{ct}_2\) is less than \(1/16\), this procedure correctly produces the \(\mathsf{AND}\) result. Similarly, we use $\mathsf{HomXOR}$ and $\mathsf{HomXNOR}$ to denote homomorphic XOR and XNOR operations, respectively.

\section{Overall Protocol and Properties}

The proposed protocol involves four parties: a client (\textsf{CL}), a server (\textsf{SE}), a data generator (\textsf{DG}), and a satellite (\textsf{SA}). The primary objective is to protect the client’s locational data, ensuring that it can only be accessed by the client.

\subsection{Assumptions}

\noindent\textbf{Semi-Honest Server.} The server follows the protocol correctly but is considered curious and may attempt to learn the client's locational information \((x, y)\).

\noindent \textbf{Secure Client-Satellite Communication.} The client securely obtains its locational information \((x, y)\) from the satellite, ensuring confidentiality during transmission.

\subsection{Detailed Protocol}

\subsubsection{Preprocessing Phase}
The protocol involves three parties: \textsf{CL}, \textsf{SE}, and \textsf{DG} (see Fig.~\ref{fig:preprocessing}).

\begin{enumerate}
    \item \textbf{Data Collection.} \textsf{SE} collects data from \textsf{DG} to create a dataset $\mathbf{D} = \{({loc}_i, {S}_i)\}_{i=1}^{M}$. Here, each ${loc}_i$ represents a location, and ${S}_i$ represents associated data or services.

    \item \textbf{Secure Parameter Transmission.} \textsf{SE} sends the number of datasets $M$, data precision $l_S$, interval precision $l_I$, coordinate dimension $d$ to \textsf{CL} using a secure channel (e.g., TLS/SSL).

    \item \textbf{Key Generation by Client.} \textsf{CL} generates a secret key $\mathbf{s}$ and evaluation keys $\mathsf{evk}$ using the $\mathsf{KeyGen}(1^\lambda)$ algorithm. Additionally, \textsf{CL} performs the $\mathsf{PubKeyGen}$ algorithm to produce a public key set $\mathsf{pk}$ (refer to Algorithm~\ref{alg:PubKeyGen}). The generated public key set $\mathsf{pk}$ and $\mathsf{evk}$ are sent to \textsf{SE}.

    \begin{algorithm}
\caption{$\mathsf{PubKeyGen}(1^{\lambda}, \mathbf{s}, M, l_I, l_S, d) \rightarrow \mathsf{pk}$}
\label{alg:PubKeyGen}
% \KwIn{Security Parameter $\lambda$, Dimensions $d$, Bit lengths $l$, $l_S$}
% \KwOut{Public key set $\mathsf{pk}$}

Initialize $\mathsf{pk}^I, \mathsf{pk}^S \leftarrow \emptyset$\;

\For{$i = 1$ \KwTo $M$} {
    \For{$j \leftarrow 0$ \KwTo $l_{I}-1$} {
        \For{$k \leftarrow 1$ \KwTo $d$} {
            $\mathbf{a} \stackrel{\$}{\leftarrow} \mathbb{T}^n$; 
            $e \leftarrow \chi$; 
            $b \leftarrow \langle \mathbf{a}, \mathbf{s} \rangle + e$\;
            $\mathsf{ct}_{i,j,k} \leftarrow (\mathbf{a}, b)$\;
            $\mathsf{pk}^I \leftarrow \mathsf{pk}^I \cup \{\mathsf{ct}_{i,j,k}\}$\;    
        }
    }
    \For{$j \leftarrow 0$ \KwTo $l_S - 1$} {
        $\mathbf{a} \stackrel{\$}{\leftarrow} \mathbb{T}^n$; 
        $e \leftarrow \chi$; 
        $b \leftarrow \langle \mathbf{a}, \mathbf{s} \rangle + e$\;
        $\mathsf{ct}_{i,j} \leftarrow (\mathbf{a}, b)$\;
        $\mathsf{pk}^S \leftarrow \mathsf{pk}^S \cup \{\mathsf{ct}_{i,j}\}$\;
    }
}
\Return{$\mathsf{pk}= \{\mathsf{pk}^I, \mathsf{pk}^S\}$}\;
\end{algorithm}

    \item \textbf{Data Encryption by Server.} \textsf{SE} encrypts the dataset \(\mathbf{D}\) using \textsf{CL}'s public key \(\mathsf{pk}\) through the \textsf{ServerEnc} process (refer to Algorithm~\ref{alg:ServerEnc}). 
    
    \begin{algorithm}[htbp]
\caption{$\mathsf{ServerEnc}(\mathbf{D}, \mathsf{pk}) \rightarrow \mathsf{Enc}_{\mathsf{pk}}(\mathbf{D})$}
\label{alg:ServerEnc}

Initialize $\mathsf{Enc}_{\mathsf{pk}}(\mathbf{D}) \leftarrow \emptyset$\;

\For{$i \leftarrow 1$ \KwTo $M$} {
    \For{$j \leftarrow 0$ \KwTo $l_{I}-1$} {
        \For{$k \leftarrow 1$ \KwTo $d$} {
            $\mathsf{ct}_{i,j,k} \leftarrow (\mathbf{0}, \mathsf{Ecd}(x_{i,k}[j])) + \mathsf{pk}^I_{i,j,k}$\;
            $\mathsf{Enc}_{\mathsf{pk}}(\mathbf{D}) \leftarrow \mathsf{Enc}_{\mathsf{pk}}(\mathbf{D}) \cup \{\mathsf{ct}_{i,j,k}\}$\;    
        }
    }
    \For{$j \leftarrow 0$ \KwTo $l_S - 1$} {
        $\mathsf{ct}_{i,j} \leftarrow (\mathbf{0}, \mathsf{Ecd}(S_i[j])) + \mathsf{pk}^S_{i,j}$\;
        $\mathsf{Enc}_{\mathsf{pk}}(\mathbf{D}) \leftarrow \mathsf{Enc}_{\mathsf{pk}}(\mathbf{D}) \cup \{\mathsf{ct}_{i,j}\}$\; 
    }
}

\Return{$\mathsf{Enc}_{\mathsf{pk}}(\mathbf{D})$}\;

\end{algorithm}
\end{enumerate}

% ---------- alg: Public Key Generationl
% \input{Algs/pubKeyGen}

% ---------- alg: Server Encrypt
% \input{Algs/serverEnc}

\subsubsection{Evaluation Phase} 
The evaluation protocol involves three parties: \textsf{SA}, \textsf{CL}, and \textsf{SE} (see Fig.~\ref{fig:evaluation}).

\begin{enumerate}
    \item \textbf{Location Reception by Client.} \textsf{CL} receives its locational information \((x, y)\) from \textsf{SA}, where \(x\) represents the latitude and \(y\) the longitude of \textsf{CL}'s location coordinates.
    
    \item \textbf{Encryption of Location.} \textsf{CL} encrypts \((x, y)\) using its secret key \(\mathbf{s}\) to obtain \((\mathsf{Enc}_{\mathbf{s}}(x), \mathsf{Enc}_{\mathbf{s}}(y))\). These encrypted coordinates are sent to \textsf{S}.
    
    \item \textbf{Evaluation by Server.} \textsf{SE} uses the evaluation key \(\mathsf{evk}\) to evaluate the \textsf{VeLoPIR} circuit given the encrypted GPS location \((\mathsf{Enc}_{\mathbf{s}}(x), \mathsf{Enc}_{\mathbf{s}}(y))\) of \textsf{CL} and its encrypted database \(\mathsf{Enc}_{\mathsf{pk}}(\mathbf{D})\). The goal is to obtain the encrypted result \(R\), which corresponds to the data associated with \textsf{CL}'s location. Algorithm~\ref{alg:VeLoPIR} in the following section provides the details of the server's evaluation process.
    
    \item \textbf{Return of Encrypted Result.} \textsf{SE} sends the encrypted result \(R\) back to \textsf{CL}.
    
    \item \textbf{Decryption by Client.} \textsf{CL} performs decryption using its secret key \(\mathbf{s}\) to obtain the result: \(\mathsf{Dec}_{\mathbf{s}}(R)\).
\end{enumerate}

\section{VeLoPIR Evaluation}

\textsf{VeLoPIR} is structured around three operational modes: \emph{Interval Validation}, \emph{Coordinate Validation}, and \emph{Identifier Matching}, with the latter being an extension of the coordinate-based approach.

\begin{itemize}
    \item \textbf{Interval Validation (\textsf{IntV})}: Checks if a client’s encrypted location falls within a specific geographical interval, such as city boundaries.
    \item \textbf{Coordinate Validation (\textsf{CoV})}: Verifies if the client’s encrypted coordinates match a predefined target location.
    \item \textbf{Identifier Matching (\textsf{IdM})}: Extends \textsf{CoV} by matching the client’s encrypted location to an identifier (e.g., city name or postal code) instead of direct coordinates.
\end{itemize}

Note that instead of directly matching coordinates \((x, y)\) (as in \textsf{CoV}), \textsf{IdM} maps these coordinates to a string identifier \(id_{\textsf{CL}}\). Since \textsf{IdM} follows the same validation process as \textsf{CoV}, the correctness, security, and privacy proofs for \textsf{CoV} naturally extend to \textsf{IdM}, ensuring that identifier matching offers the same guarantees as coordinate validation.

\subsection{Algorithm Overview}

For each entry in the encrypted database \(\mathsf{Enc}_{\mathsf{pk}}(\mathbf{D})\), which contains encrypted location coordinates \({loc}_i\) and associated data \({S}_i\) for up to \(M\) entries, the following steps are performed:

\begin{enumerate}
    \item \textbf{Validation.} Determine whether the user's encrypted coordinates \((\mathsf{Enc}_{\mathbf{s}}(x), \mathsf{Enc}_{\mathbf{s}}(y))\) fall within \({loc}_i\). If a match is found, output \(\mathsf{Enc}_{\mathbf{s}}(1)\); otherwise, output \(\mathsf{Enc}_{\mathbf{s}}(0)\). Store the validation result as \(v\). This step utilizes \textsf{IntV}, \textsf{CoV}, or \textsf{IdM}.

    \item \textbf{Zero Out Unrelated Data.} Use the validation result \(v\) to eliminate unrelated services or data by performing a bitwise AND operation with all \({S}_i\). If validation is successful at position \(\kappa\), only \({S}_\kappa\) remains unchanged, while other \({S}_i\) become vectors of encrypted zeros \(\mathsf{Enc}_{\mathbf{s}}(0)\).
\end{enumerate}

\noindent\textbf{Aggregate Results.} After the iteration of all locational entries, sum all \({S}_i\). Since only \({S}_\kappa\) is non-zero, the result will be \({S}_\kappa\).

\begin{algorithm}[htbp]
\caption{\(\mathsf{VeLoPIR}(\mathsf{Enc}_{\mathbf{s}}(x), \mathsf{Enc}_{\mathbf{s}}(y), \mathsf{Enc}_{\mathsf{pk}}(\mathbf{D}), \mathsf{evk})\)}
\label{alg:VeLoPIR}

\For{$i \leftarrow 1$ \KwTo $M$} {
    \textbf{Validation:} Check if \((\mathsf{Enc}_{\mathbf{s}}(x), \mathsf{Enc}_{\mathbf{s}}(y))\) fall within or match \({loc}_i\)\;
    \tcc{\text{Use $\mathsf{IntV}$, $\mathsf{CoV}$, or $\mathsf{IdM}$}}

    \If{validation is successful}{
        $v \leftarrow \mathsf{Enc}_{\mathbf{s}}(1)$\;
    } \Else {
        $v \leftarrow \mathsf{Enc}_{\mathbf{s}}(0)$\;
    }
    
    \textbf{Zero Out Unrelated Data:} Apply $v$ to filter the data \({S}_{\kappa}\)\;
    \ForEach{${S}_i$}{
        ${S}_i \leftarrow \mathsf{HomBitwiseAND}(v, \mathsf{Enc}_{\mathbf{s}}({S}_i), \mathsf{evk})$\;
    }
}
\textbf{Aggregation:} $R \leftarrow \mathsf{HomSum}(\{\mathsf{Enc}_{\mathbf{s}}(S_i)\}_{i=1}^M, \mathsf{evk})$\;

\Return{$R$}\;
\end{algorithm}

\subsection{Core Algorithms}

% ----- Interval Valid
\noindent\textbf{(1) Interval Validation, \textsf{IntV}.}
For interval validation, the locational coordinates \(loc_i\) are defined by intervals of latitude \((x_{{left}}, x_{{right}})\) and longitude \((y_{{left}}, y_{{right}})\). The algorithm leverages nonlinear comparison operations, specifically less than or equal ($\mathsf{HomCompLE}$) and less than ($\mathsf{HomCompL}$), to perform these comparisons homomorphically (refer to Algorithm~\ref{alg:InterValid}). 

\begin{algorithm}[htbp]
\caption{\(\mathsf{IntV}(\mathsf{Enc}_{\mathbf{s}}(x), \mathsf{Enc}_{\mathbf{s}}(y), \mathsf{Enc}_{\mathbf{s}}(loc), \mathsf{evk})\)}
\label{alg:InterValid}

\tcc{$\mathsf{Enc}_{\mathbf{s}}(loc) = (\mathsf{Enc}_{\mathbf{s}}(x_{left}), \dots, \mathsf{Enc}_{\mathbf{s}}(y_{right}))$}

% Perform homomorphic comparisons for latitude
\(v_{x_{left}} \leftarrow \mathsf{HomCompLE}(\mathsf{Enc}_{\mathbf{s}}(x_{left}), \mathsf{Enc}_{\mathbf{s}}(x), l_I, \mathsf{evk})\)\;
\(v_{x_{right}} \leftarrow \mathsf{HomCompL}(\mathsf{Enc}_{\mathbf{s}}(x), \mathsf{Enc}_{\mathbf{s}}(x_{right}), l_I, \mathsf{evk})\)\;

% Perform homomorphic comparisons for longitude
\(v_{y_{left}} \leftarrow \mathsf{HomCompLE}(\mathsf{Enc}_{\mathbf{s}}(y_{left}), \mathsf{Enc}_{\mathbf{s}}(y), l_I, \mathsf{evk})\)\;
\(v_{y_{right}} \leftarrow \mathsf{HomCompL}(\mathsf{Enc}_{\mathbf{s}}(y), \mathsf{Enc}_{\mathbf{s}}(y_{right}), l_I, \mathsf{evk})\)\;

% Combine latitude results
\(v_{x} \leftarrow \mathsf{HomAND}(v_{x_{left}}, v_{x_{right}}, \mathsf{evk})\)\;

% Combine longitude results
\(v_{y} \leftarrow \mathsf{HomAND}(v_{y_{left}}, v_{y_{right}}, \mathsf{evk})\)\;

% Final validation by combining both latitude and longitude results
\(v \leftarrow \mathsf{HomAND}(v_x, v_y, \mathsf{evk})\)\;

\Return \(v\)\;

\end{algorithm}

Given the encrypted client’s locational information $(\mathsf{Enc}_{\mathbf{s}}(x),$ $ \mathsf{Enc}_{\mathbf{s}}(y))$, the algorithm \textsf{IntV} evaluates the following plaintext condition: \(x_{{left}} \leq x < x_{{right}}\) and \(y_{{left}} \leq y < y_{{right}}\). The validation outputs \(\mathsf{Enc}_{\mathbf{s}}(1)\) if the conditions are satisfied; otherwise, it outputs \(\mathsf{Enc}_{\mathbf{s}}(0)\).

Note that $\mathsf{HomAND}$ denotes the AND operation performed homomorphically on encrypted bits, producing the AND result. Similarly, $\mathsf{HomXOR}$ and $\mathsf{HomXNOR}$ represent homomorphic gates that perform XOR and XNOR operations, respectively (refer to Section~\ref{sec:background} for additional background details). 

We provide an in-depth discussion of the underlying algorithms, $\mathsf{HomCompL}$ and $\mathsf{HomCompLE}$, which form the backbone of $\mathsf{IntV}$, followed by the correctness of $\mathsf{IntV}$.

% \vspace{\baselineskip}

\noindent\textbf{Backbones of {IntV}: FHE Comparisons.}
$\mathsf{HomCompLE}$ is a nonlinear FHE comparison circuit that takes two ciphertexts \(\mathsf{ct_1}\) and \(\mathsf{ct_2}\), and outputs \(\mathsf{Enc}_{\mathbf{s}}(1)\) if \(x \leq y\), and \(\mathsf{Enc}_{\mathbf{s}}(0)\) otherwise. This operation is carried out using the evaluation key \(\mathsf{evk}\), with \(\mathsf{ct_1}\) and \(\mathsf{ct_2}\) being encryptions of \(z_1\) and \(z_2\) under the secret key \(\mathbf{s}\) (refer to Algorithm~\ref{alg:HomCompLE}). 

\begin{algorithm}[htbp]
\caption{\(\mathsf{HomCompLE}(\mathsf{ct_1}, \mathsf{ct_2}, l_I, \mathsf{evk})\)}
\label{alg:HomCompLE}

% Compare sign bits to check if the signs are different
\(t_0 \leftarrow \mathsf{HomXOR}(\mathsf{ct_1}[l_I - 1], \mathsf{ct_2}[l_I - 1], \mathsf{evk})\)\;

% Initialize \(t_2\) for the case where signs are the same
$t_2 \leftarrow \mathsf{Enc}_{\mathbf{s}}^{\mathsf{TLWE}}(\mathbf{0}, \frac{1}{8})$ 
% \tcp{\text{Trivial $\mathsf{TLWE}$ encryption of $1$}}

% Proceed with bitwise comparison if signs are the same
\For{$i \leftarrow 0$ \KwTo $l_I - 2$} {
    \(t_1 \leftarrow \mathsf{HomXNOR}(\mathsf{ct_1}[i], \mathsf{ct_2}[i], \mathsf{evk})\)\;
    \(t_2 \leftarrow \mathsf{HomMUX}(t_1, t_2, \mathsf{ct_2}[i], \mathsf{evk})\)\;
}

% Determine the final result based on sign comparison
\(r \leftarrow \mathsf{HomMUX}(t_0, \mathsf{ct_1}[l_I - 1], t_2, \mathsf{evk})\)

\Return \(r\)

\end{algorithm}

The circuit's design is fundamental to the correctness of the $\mathsf{IntV}$ algorithm, as discussed in~\cite{yoo-torus}. For completeness, Theorem~\ref{thm:comple} formally establishes the correctness of $\mathsf{HomCompLE}$, with the proof provided in Appendix~\ref{appendix:proof_LE}. 

\begin{theorem}
\label{thm:comple}
The homomorphic comparison $\mathsf{HomCompLE}(\mathsf{ct_1},$ $ \mathsf{ct_2},$ $ \mathsf{evk})$ stated in Algorithm~\ref{alg:HomCompLE} correctly evaluates whether $z_1 \leq z_2$ and outputs the result $r$ as:
\begin{equation*}
r = 
\begin{cases} 
\mathsf{Enc}_{\mathbf{s}}(1), & \text{if } z_1 \leq z_2 \\
\mathsf{Enc}_{\mathbf{s}}(0), & \text{otherwise}
\end{cases}
\end{equation*}
where $\mathsf{ct_1}$ and $\mathsf{ct_2}$ are encryptions of $z_1$ and $z_2$, respectively, under the secret key $\mathbf{s}$.
\end{theorem}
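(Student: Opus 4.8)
The plan is to reduce the claim to a purely Boolean (plaintext) statement and then verify that Boolean circuit by a loop-invariant argument. First I would invoke the correctness of the underlying bootstrapped gates: each of $\mathsf{HomXOR}$, $\mathsf{HomXNOR}$, and $\mathsf{HomMUX}$ outputs a fresh ciphertext whose phase decrypts to the corresponding Boolean function of the input bits, provided every input ciphertext carries error below the $1/16$ threshold required for correct decryption. Since every gate terminates in a $\mathsf{Bootstrap}$, the output noise is reset to the fixed post-bootstrapping level at each step, so by induction on the gate count the decryption condition is maintained along the whole circuit. Consequently, writing $z_k[i]$ for the $i$-th bit of $z_k$ (with index $l_I-1$ the most-significant, two's-complement sign bit) and letting $c$ denote the bit that equals $1$ when $z_1 \le z_2$ and $0$ otherwise, it suffices to show that the associated plaintext circuit maps the bit vectors of $z_1,z_2$ to $c$; the ciphertext $r$ then encrypts $c$, which is exactly the statement.

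Second, I would fix the multiplexer convention $\mathsf{HomMUX}(s,a,b)=a$ if $s=1$ and $=b$ if $s=0$, and extract the key structural lemma about the loop. The loop runs $i=0,\dots,l_I-2$, setting $t_1$ to the equality indicator of bit $i$ (the XNOR) and then $t_2 \leftarrow \mathsf{HomMUX}(t_1,t_2,z_2[i])$, which keeps the old $t_2$ when the bits agree and overwrites it with $z_2[i]$ when they differ. I would prove by induction on $i$ the invariant that, after iteration $i$, one has $t_2 = z_2[j]$ where $j=\max\{i'\le i : z_1[i']\neq z_2[i']\}$, and $t_2=1$ (its initial value, an encryption of $1$ since $\mathsf{Ecd}(1)=\tfrac18$) when $z_1,z_2$ agree on all bits $0,\dots,i$. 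The base case is the initialization; the inductive step is immediate from the two branches of the MUX. At loop exit this shows that $t_2$ carries the $z_2$-bit at the most significant non-sign position where the two values disagree, or $1$ if they agree on all non-sign bits.

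Third, I would close with the final multiplexer and a short case split. With $t_0 = z_1[l_I-1]\oplus z_2[l_I-1]$ the sign-disagreement indicator, the last line is $r=\mathsf{HomMUX}(t_0, z_1[l_I-1], t_2)$. If the sign bits differ, then in two's complement $z_1\le z_2$ holds exactly when $z_1$ is the negative operand, i.e. when $z_1[l_I-1]=1$, which is precisely the value routed into $r$. If the sign bits agree, then $r=t_2$, and I would argue that the unsigned comparison of the non-sign bits decides the signed order: for a fixed sign bit the value equals a constant plus the unsigned value of the low-order bits, hence is monotone in them, so the both-positive and both-negative subcases reduce to the same lexicographic test. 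Combining with the invariant, $t_2=1$ exactly when the low bits are all equal (so $z_1=z_2$), and otherwise $t_2=z_2[j^\ast]$ with $j^\ast$ the top differing bit, where $z_2[j^\ast]=1 \iff z_1[j^\ast]=0 \iff z_1<z_2$; either way $r$ equals $c$.

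I expect the main obstacle to be the two's-complement case analysis in the equal-sign branch—specifically justifying that comparing the low-order bits as unsigned quantities yields the correct signed order even when both operands are negative—together with stating the loop invariant tightly enough that the ``most significant differing bit'' is correctly identified, given that the loop walks from the least significant bit upward and repeatedly overwrites $t_2$. The gate-level correctness and the noise-budget bookkeeping, by contrast, are routine once the per-gate decryption condition $|e|<1/16$ and the refreshing property of $\mathsf{Bootstrap}$ are cited.
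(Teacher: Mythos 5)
Your proof is correct and follows essentially the same route as the paper's: initialize $t_2$ to an encryption of $1$, show the XNOR/MUX loop leaves $t_2$ equal to $z_2$'s bit at the most significant differing non-sign position (or $1$ on equality), and let the final MUX resolve the sign case. Your version is in fact tighter than the paper's — the explicit loop invariant and the two's-complement monotonicity argument for the equal-sign branch are only implicit there — but the decomposition and key ideas coincide.
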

\begin{proof}
    See Appendix~\ref{appendix:proof_LE}.
\end{proof}

$\mathsf{HomCompL}$ is similar in design to $\mathsf{HomCompLE}$, with the only variation being the initialization of the \(t_2\) variable, which is set to a trivial \(\mathsf{TLWE}\) ciphertext of 0 (refer to Algorithm~\ref{alg:HomCompL}).

\begin{algorithm}[htbp]
\caption{\(\mathsf{HomCompL}(\mathsf{ct_1}, \mathsf{ct_2}, l_I, \mathsf{evk})\)}
\label{alg:HomCompL}

% Compare sign bits to check if the signs are different
\(t_0 \leftarrow \mathsf{HomXOR}(\mathsf{ct_1}[l_I - 1], \mathsf{ct_2}[l_I - 1], \mathsf{evk})\)\;

% Initialize \(t_2\) for the case where signs are the same
\(t_2 \leftarrow \mathsf{Enc}_{\mathbf{s}}^{\mathsf{TLWE}}(\mathbf{0}, -\frac{1}{8})\) % Trivial $\mathsf{TLWE}$ encryption of $0$

% Proceed with bitwise comparison if signs are the same
\For{$i \leftarrow 0$ \KwTo $l_I - 2$} {
    \(t_1 \leftarrow \mathsf{HomXNOR}(\mathsf{ct_1}[i], \mathsf{ct_2}[i], \mathsf{evk})\)\;
    \(t_2 \leftarrow \mathsf{HomMUX}(t_1, t_2, \mathsf{ct_2}[i], \mathsf{evk})\)\;
}

% Determine the final result based on sign comparison
\(r \leftarrow \mathsf{HomMUX}(t_0, \mathsf{ct_1}[l_I - 1], t_2, \mathsf{evk})\)\;

\Return \(r\)\;

\end{algorithm}

Corollary~\ref{cor:compl} establishes the correctness of Algorithm~\ref{alg:HomCompL}.

\begin{corollary}
\label{cor:compl}
The homomorphic comparison $\mathsf{HomCompL}(\mathsf{ct_1},$ $ \mathsf{ct_2}, $ $\mathsf{evk})$ stated in Algorithm~\ref{alg:HomCompL} correctly evaluates whether \(z_1 < z_2\) and outputs the result \(r\) as:
\begin{equation*}
r = 
\begin{cases} 
\mathsf{Enc}_{\mathbf{s}}(1), & \text{if } z_1 < z_2 \\
\mathsf{Enc}_{\mathbf{s}}(0), & \text{otherwise}
\end{cases}
\end{equation*}
where \(\mathsf{ct_1}\) and \(\mathsf{ct_2}\) are encryptions of \(z_1\) and \(z_2\), respectively, under the secret key \(\mathbf{s}\).
\end{corollary}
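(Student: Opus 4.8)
The plan is to prove the corollary by reduction to Theorem~\ref{thm:comple} rather than re-deriving the full bit-level comparison. Algorithm~\ref{alg:HomCompL} is identical to Algorithm~\ref{alg:HomCompLE} except that the accumulator $t_2$ is initialized to the trivial ciphertext $\mathsf{Enc}_{\mathbf{s}}(0)$ (encoded as $-1/8$) instead of $\mathsf{Enc}_{\mathbf{s}}(1)$ (encoded as $1/8$). I would therefore isolate exactly how this single change propagates through the remainder of the circuit and argue that it alters the output only on the diagonal $z_1 = z_2$.

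First I would establish a lemma on the loop. Let $t_2^{LE}$ and $t_2^{L}$ denote the value of $t_2$ after the loop in the two algorithms. Because the loop body is shared and each call $\mathsf{HomMUX}(t_1, t_2, \mathsf{ct_2}[i])$ either retains the previous $t_2$ (when the XNOR flag $t_1$ signals $\mathsf{ct_1}[i] = \mathsf{ct_2}[i]$) or overwrites it with $\mathsf{ct_2}[i]$ (when the bits differ), I would show that as soon as any lower bit $i \in \{0,\dots,l_I-2\}$ differs, the accumulator is overwritten and all subsequent iterations behave identically for both algorithms; hence $t_2^{LE} = t_2^{L} = \mathsf{ct_2}[i_{\max}]$, where $i_{\max}$ is the highest differing position among the lower bits. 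Conversely, if every lower bit agrees, no overwrite occurs and each accumulator retains its initial value, so $t_2^{LE} = \mathsf{Enc}_{\mathbf{s}}(1)$ while $t_2^{L} = \mathsf{Enc}_{\mathbf{s}}(0)$.

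Next I would feed this into the unchanged final step $r = \mathsf{HomMUX}(t_0, \mathsf{ct_1}[l_I-1], t_2)$ and split on the sign-bit flag $t_0 = \mathsf{HomXOR}(\mathsf{ct_1}[l_I-1], \mathsf{ct_2}[l_I-1])$. When the sign bits differ ($t_0 = 1$) the output is $\mathsf{ct_1}[l_I-1]$ for both algorithms, and since $z_1 \neq z_2$ in this case the predicates $z_1 \leq z_2$ and $z_1 < z_2$ coincide, so both circuits return the same correct bit. When the sign bits agree ($t_0 = 0$) the output is $t_2$; if some lower bit differs then $t_2^{LE} = t_2^{L}$ and again $z_1 \neq z_2$, so the two circuits agree. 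The only remaining case is when both the sign bits and all lower bits agree, i.e.\ $z_1 = z_2$, where $\mathsf{HomCompLE}$ returns $\mathsf{Enc}_{\mathbf{s}}(1)$ and $\mathsf{HomCompL}$ returns $\mathsf{Enc}_{\mathbf{s}}(0)$. Since $\mathsf{HomCompLE}$ correctly computes the predicate $z_1 \leq z_2$ by Theorem~\ref{thm:comple}, and the predicate $z_1 < z_2$ differs from $z_1 \leq z_2$ exactly on $z_1 = z_2$, this establishes that $\mathsf{HomCompL}$ computes $z_1 < z_2$.

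I expect the main obstacle to be the forgetting argument in the loop lemma: one must verify rigorously that after the first overwrite the initialization value is irrecoverable, so that the two accumulators remain equal for the rest of the loop regardless of their differing starts. A secondary point to state cleanly is that all of this plaintext-level reasoning transfers to ciphertexts because the \textsf{TFHE} gates $\mathsf{HomXOR}$, $\mathsf{HomXNOR}$, and $\mathsf{HomMUX}$ each bootstrap and therefore compute their Boolean functions correctly under the decryption map, provided the input noise stays within the $1/16$ bound required by the decryption condition.
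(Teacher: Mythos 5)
Your proposal is correct and follows essentially the same route as the paper's own proof: both reduce to Theorem~\ref{thm:comple} by observing that the only change is the initialization of $t_2$, which affects the output only when all bits of $z_1$ and $z_2$ agree, i.e.\ exactly where the predicates $z_1 \leq z_2$ and $z_1 < z_2$ differ. Your version is somewhat more rigorous than the paper's (which asserts rather than argues the ``forgetting'' property of the accumulator after the first overwrite), but the decomposition and key observation are the same.
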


\begin{proof}
    See Appendix~\ref{appendix:proof_L}.
\end{proof}

We now formally establish the correctness of $\mathsf{IntV}$, as illustrated in Algorithm~\ref{alg:InterValid}, with the following theorem.

\begin{theorem}
\label{thm:bb1}
The interval validation algorithm (\textsf{IntV}) as stated in Algorithm~\ref{alg:InterValid} correctly evaluates whether the encrypted coordinate \(\mathsf{Enc}_{\mathbf{s}}(x)\), \(\mathsf{Enc}_{\mathbf{s}}(y)\) lies within the interval $loc = (x_{left}, x_{right}, $ $y_{left}, y_{right})$. Specifically, it outputs:
\[
v \leftarrow \mathsf{Enc}_{\mathbf{s}}(1)
\]
if \(x_{left} \leq x < x_{right}\) and \(y_{left} \leq y < y_{right}\), and otherwise outputs $\mathsf{Enc}_{\mathbf{s}}(0)$.

\end{theorem}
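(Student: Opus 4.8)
The plan is to prove correctness compositionally, treating the four homomorphic comparisons and the three $\mathsf{HomAND}$ gates of Algorithm~\ref{alg:InterValid} as black boxes whose input–output behavior is already certified by Theorem~\ref{thm:comple}, Corollary~\ref{cor:compl}, and the bootstrapped $\mathsf{HomAND}$ construction of Section~\ref{sec:background}. The strategy is to first establish that each of the four validation ciphertexts correctly encrypts the truth value of a single scalar inequality, and then show that the cascade of AND gates computes exactly the conjunction of these four conditions, which is by definition the membership predicate $x_{left} \le x < x_{right}$ and $y_{left} \le y < y_{right}$.

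First I would apply Theorem~\ref{thm:comple} to the two $\mathsf{HomCompLE}$ invocations: since $\mathsf{Enc}_{\mathbf{s}}(x_{left})$ and $\mathsf{Enc}_{\mathbf{s}}(x)$ are encryptions of $x_{left}$ and $x$, the theorem guarantees $v_{x_{left}} = \mathsf{Enc}_{\mathbf{s}}(1)$ if and only if $x_{left} \le x$, and symmetrically $v_{y_{left}} = \mathsf{Enc}_{\mathbf{s}}(1)$ iff $y_{left} \le y$. Next I would invoke Corollary~\ref{cor:compl} on the two $\mathsf{HomCompL}$ invocations, observing that the operands are supplied so that $x$ (respectively $y$) occupies the first argument, whence $v_{x_{right}} = \mathsf{Enc}_{\mathbf{s}}(1)$ iff $x < x_{right}$ and $v_{y_{right}} = \mathsf{Enc}_{\mathbf{s}}(1)$ iff $y < y_{right}$. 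At this point each of the four ciphertexts encrypts a single bit whose value equals the truth of the corresponding inequality.

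Then I would combine the partial results through the three $\mathsf{HomAND}$ gates. Because $\mathsf{HomAND}$ correctly evaluates the logical AND on encrypted bits, $v_x = \mathsf{HomAND}(v_{x_{left}}, v_{x_{right}}, \mathsf{evk})$ encrypts $1$ exactly when both $x_{left} \le x$ and $x < x_{right}$ hold, i.e.\ when $x_{left} \le x < x_{right}$; an identical argument gives $v_y$ encrypting the longitude condition. The final gate $v = \mathsf{HomAND}(v_x, v_y, \mathsf{evk})$ therefore encrypts $1$ precisely when both coordinate conditions are simultaneously satisfied, so that $v = \mathsf{Enc}_{\mathbf{s}}(1)$ iff the point lies in the interval and $v = \mathsf{Enc}_{\mathbf{s}}(0)$ otherwise. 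A short truth-table check over the four combinations of each pair of AND operands makes the correctness of each gate explicit and is the only routine computation required.

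The main obstacle I anticipate is not the Boolean logic but the noise budget: each sub-circuit must return a ciphertext whose error magnitude stays below the $1/16$ decryption threshold so that the subsequent $\mathsf{HomAND}$ gates receive valid inputs. I would address this by noting that $\mathsf{HomCompLE}$, $\mathsf{HomCompL}$, and $\mathsf{HomAND}$ are all bootstrapped operations that refresh the noise to a fixed level with every gate, so the freshly produced $v_{x_{left}}, \dots, v_{y_{right}}$ satisfy the $|e| < 1/16$ hypothesis required by $\mathsf{HomAND}$, and this bound is maintained through both levels of the AND cascade. Hence the logical composition above is simultaneously a valid homomorphic computation, which completes the proof.
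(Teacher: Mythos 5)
Your proposal is correct and follows essentially the same compositional argument as the paper's proof in Appendix~\ref{appendix:proof_bb1}: invoke Theorem~\ref{thm:comple} and Corollary~\ref{cor:compl} for the four comparison ciphertexts, then combine them through the cascade of $\mathsf{HomAND}$ gates. Your added remark that bootstrapping keeps the noise below the $1/16$ threshold at each stage is a welcome detail the paper leaves implicit, but it does not change the structure of the argument.
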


\begin{proof}
    See Appendix~\ref{appendix:proof_bb1}.
\end{proof}

% ----- Coordinate Valid

\noindent \textbf{(2) Coordinate Validation, \textsf{CoV} (and \textsf{IdM}).}  
\textsf{CoV} efficiently matches the client’s location coordinates \((x, y)\) with a predefined service location \({loc}_i\). If the client’s encrypted coordinates match \({loc}_i\), the corresponding service \({S}_i\) is returned. \textsf{CoV} relies on the homomorphic equality comparison algorithm \(\mathsf{HomEQ}\) (refer to Algorithm~\ref{alg:HomEQ}) for secure matching. Since \textsf{IdM} extends \textsf{CoV} by matching coordinates to an identifier, the correctness naturally follows from \textsf{CoV}, and we therefore focus only on \textsf{CoV}.

\begin{algorithm}[htbp]
\caption{\(\mathsf{HomEQ}(\mathsf{ct_1}, \mathsf{ct_2}, l_I, \mathsf{evk})\)}
\label{alg:HomEQ}

% Initialize the result as \(1\) (i.e., assume equality initially)
$r \leftarrow \mathsf{Enc}_{\mathbf{s}}^{\mathsf{TLWE}}(\mathbf{0}, \frac{1}{8})$ 
% \tcp{\text{Trivial $\mathsf{TLWE}$ encryption of $1$}}

% Iterate through each bit and check equality
\For{$i \leftarrow 0$ \KwTo $l_I - 1$} {
    \(t \leftarrow \mathsf{HomXNOR}(\mathsf{ct_1}[i], \mathsf{ct_2}[i], \mathsf{evk})\)\;
    \(r \leftarrow \mathsf{HomAND}(r, t, \mathsf{evk})\)\;
}

\Return \(r\)

\end{algorithm}

The correctness of \textsf{CoV} (and \textsf{IdM}) relies on \textsf{HomEQ}, as established in Lemma~\ref{lem:EQ}, with the proof in Appendix~\ref{appendix:proof_EQ}.

\begin{lemma}
\label{lem:EQ}
The homomorphic equality comparison $\mathsf{HomEQ}($ $\mathsf{ct_1},$ $ \mathsf{ct_2},$ $ \mathsf{evk})$ correctly evaluates whether $z_1 = z_2$ and outputs the result $r$ as:
\begin{equation*}
r = 
\begin{cases} 
\mathsf{Enc}_{\mathbf{s}}(1), & \text{if } z_1 = z_2 \\
\mathsf{Enc}_{\mathbf{s}}(0), & \text{otherwise}
\end{cases}
\end{equation*}
where $\mathsf{ct_1}$ and $\mathsf{ct_2}$ are encryptions of $z_1$ and $z_2$, respectively, under the secret key $\mathbf{s}$.
\end{lemma}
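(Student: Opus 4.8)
The plan is to mirror the structure used for $\mathsf{HomCompLE}$ (Theorem~\ref{thm:comple}): reduce the homomorphic computation to its plaintext Boolean semantics, establish a loop invariant by induction, and invoke the fact that each $\mathsf{TFHE}$ gate bootstraps to keep the noise within the decryption bound. First I would fix the encoding. Since $\mathsf{Ecd}(1) = \frac{1}{4} - \frac{1}{8} = \frac{1}{8}$, the initialization $r \leftarrow \mathsf{Enc}_{\mathbf{s}}^{\mathsf{TLWE}}(\mathbf{0}, \frac{1}{8})$ is precisely a trivial encryption of the bit $1$; this will serve as the identity for conjunction and represents the empty-conjunction base case.

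Next I would record the plaintext semantics of the two gates invoked in the loop. For bits $a, b \in \mathbb{B}$, we have $\mathsf{XNOR}(a,b) = 1$ exactly when $a = b$, while $\mathsf{AND}$ is ordinary conjunction. By the gate-correctness guarantee of $\mathsf{TFHE}$ (Section~\ref{sec:background}), provided the input ciphertexts carry noise below $\frac{1}{16}$, the operation $\mathsf{HomXNOR}(\mathsf{ct_1}[i], \mathsf{ct_2}[i])$ decrypts to $1$ iff $z_1[i] = z_2[i]$, and $\mathsf{HomAND}(r, t)$ decrypts to the conjunction of the decryptions of $r$ and $t$. Crucially, because each gate terminates in a $\mathsf{Bootstrap}$, its output again satisfies the $\frac{1}{16}$ noise bound, so the correctness hypothesis propagates from one iteration to the next.

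With these facts in hand, I would prove the loop invariant by induction on the iteration index $i$: after the $i$-th pass, $r$ decrypts to $1$ if $z_1[j] = z_2[j]$ for all $0 \leq j \leq i$, and to $0$ otherwise. The base case is the initialization (an encryption of $1$, the empty conjunction). For the inductive step, the update $r \leftarrow \mathsf{HomAND}\!\left(r, \mathsf{HomXNOR}(\mathsf{ct_1}[i], \mathsf{ct_2}[i])\right)$ conjoins the running equality flag with the per-bit indicator for $z_1[i] = z_2[i]$, so under decryption $r$ becomes $\bigwedge_{j=0}^{i} [\, z_1[j] = z_2[j]\,]$. Instantiating the invariant at $i = l_I - 1$ yields that $r$ decrypts to $1$ iff $z_1[j] = z_2[j]$ for every bit position $j$, i.e.\ iff $z_1 = z_2$ (both being $l_I$-bit integers), and to $0$ otherwise, which is exactly the claimed output.

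I expect the only delicate point to be the noise bookkeeping in the inductive step: one must argue that the bootstrapped output of each gate re-enters the admissible noise regime so that the next gate's correctness hypothesis still applies down the full chain of $l_I$ sequential conjunctions. Since this is precisely the per-gate refresh property guaranteed by $\mathsf{TFHE}$'s bootstrapping, the obstacle is conceptual rather than computational, and the Boolean induction carries the remainder of the argument.
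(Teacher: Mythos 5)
Your proposal is correct and follows essentially the same route as the paper's proof: initialize $r$ as a trivial encryption of $1$, observe that each loop iteration conjoins the running flag with the per-bit $\mathsf{XNOR}$ equality indicator, and conclude that $r$ decrypts to $1$ iff all bit positions agree. Your explicit loop-invariant induction and the remark on bootstrapping keeping the noise within the decryption bound are refinements the paper leaves implicit, but they do not change the argument.
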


\begin{proof}
    See Appendix~\ref{appendix:proof_EQ}
\end{proof}

\textsf{CoV} and \textsf{IdM} both utilize the \(\mathsf{HomEQ}\) operation, as shown in Algorithm~\ref{alg:BB2}. In \textsf{IdM}, instead of applying homomorphic AND operation, \(\mathsf{HomEQ}\) directly evaluates the encrypted identifier \(\mathsf{Enc}_{\mathbf{s}}(id_{\mathsf{CL}})\).

\begin{algorithm}[htbp]
\caption{\(\mathsf{CoV}(\mathsf{Enc}_{\mathbf{s}}(x), \mathsf{Enc}_{\mathbf{s}}(y), \mathsf{Enc}_{\mathbf{s}}(loc), \mathsf{evk})\)}
\label{alg:BB2}

\tcc{($loc = (loc_x, loc_y)$)}

% Perform homomorphic equality check for the x-coordinate
\(v_x \leftarrow \mathsf{HomEQ}(\mathsf{Enc}_{\mathbf{s}}(x), \mathsf{Enc}_{\mathbf{s}}(loc_x), \mathsf{evk})\)\;

% Perform homomorphic equality check for the y-coordinate
\(v_y \leftarrow \mathsf{HomEQ}(\mathsf{Enc}_{\mathbf{s}}(y), \mathsf{Enc}_{\mathbf{s}}(loc_y), \mathsf{evk})\)\;

% Final validation by combining both x and y results
\(v \leftarrow \mathsf{HomAND}(v_x, v_y, \mathsf{evk})\)\;

\Return \(v\)\;

\end{algorithm}

The correctness of \textsf{CoV} (and \textsf{IdM}) is shown in Theorem~\ref{thm:bb2}.

\begin{theorem}
\label{thm:bb2}
The coordinate validation algorithm (\textsf{CoV}, similarly \textsf{IdM}), as stated in Algorithm~\ref{alg:BB2}, correctly evaluates whether the encrypted coordinates \(\mathsf{Enc}_{\mathbf{s}}(x)\) and \(\mathsf{Enc}_{\mathbf{s}}(y)\) match the encrypted location \(\mathsf{Enc}_{\mathbf{s}}(loc)\). Specifically, it outputs:
\[
v \leftarrow \mathsf{Enc}_{\mathbf{s}}(1)
\]
if \(x = loc_x\) and \(y = loc_y\), and otherwise outputs \(\mathsf{Enc}_{\mathbf{s}}(0)\).
\end{theorem}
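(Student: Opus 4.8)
The plan is to prove correctness by composition, invoking Lemma~\ref{lem:EQ} for the two equality tests and then the correctness of the homomorphic AND gate for the final combination. First I would observe that Algorithm~\ref{alg:BB2} computes exactly three quantities: $v_x = \mathsf{HomEQ}(\mathsf{Enc}_{\mathbf{s}}(x), \mathsf{Enc}_{\mathbf{s}}(loc_x), \mathsf{evk})$, $v_y = \mathsf{HomEQ}(\mathsf{Enc}_{\mathbf{s}}(y), \mathsf{Enc}_{\mathbf{s}}(loc_y), \mathsf{evk})$, and $v = \mathsf{HomAND}(v_x, v_y, \mathsf{evk})$. By Lemma~\ref{lem:EQ}, $v_x$ decrypts to $1$ precisely when $x = loc_x$ and to $0$ otherwise, and symmetrically $v_y$ decrypts to $1$ precisely when $y = loc_y$.

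Next I would analyze the final AND gate. Since $\mathsf{HomAND}$ is a bootstrapped gate that realizes the Boolean AND on the underlying plaintext bits (as recalled in Section~\ref{sec:background}), $v$ decrypts to the logical conjunction of the bits encrypted by $v_x$ and $v_y$. A short case analysis over the four possible plaintext combinations then yields that $v$ decrypts to $1$ if and only if both $v_x$ and $v_y$ encrypt $1$, i.e.\ if and only if $x = loc_x$ and $y = loc_y$; in every remaining case at least one equality fails, the conjunction is $0$, and $v$ encrypts $0$. This matches the claimed output exactly. Because \textsf{IdM} differs only in replacing the coordinate comparisons by a single equality test on the encrypted identifier $\mathsf{Enc}_{\mathbf{s}}(id_{\mathsf{CL}})$, the same invocation of Lemma~\ref{lem:EQ} carries the argument over verbatim.

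The only point that warrants care, rather than a genuine obstacle, is verifying that the noise bounds required for correct decryption are respected through the composition. The background fixes that $\mathsf{HomAND}$ produces a correct output whenever each of its two input ciphertexts carries error of magnitude below $1/16$. Since every TFHE gate---including the final $\mathsf{HomAND}$ executed inside $\mathsf{HomEQ}$---bootstraps its output, the ciphertexts $v_x$ and $v_y$ emerge with refreshed noise well within this bound, so the top-level $\mathsf{HomAND}$ receives valid inputs and the condition $|e| < 1/16$ holds throughout the evaluation. I would state this noise-propagation step explicitly but expect it to be routine given the bootstrapping guarantee, leaving the compositional argument above as the substance of the proof.
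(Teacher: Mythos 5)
Your proposal is correct and follows essentially the same route as the paper's proof: invoke Lemma~\ref{lem:EQ} for the two coordinate equality tests and then argue that the final $\mathsf{HomAND}$ yields $\mathsf{Enc}_{\mathbf{s}}(1)$ exactly when both comparisons succeed. The only difference is that you make the noise-propagation/bootstrapping guarantee explicit, which the paper leaves implicit; this is a harmless (and welcome) elaboration rather than a different argument.
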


\begin{proof}
The correctness of the \textsf{CoV} algorithm can be directly inferred from the homomorphic equality checks performed by \(\mathsf{HomEQ}\). Specifically, the algorithm uses \(\mathsf{HomEQ}\) to compare the encrypted coordinates \(\mathsf{Enc}_{\mathbf{s}}(x)\) and \(\mathsf{Enc}_{\mathbf{s}}(y)\) with \(\mathsf{Enc}_{\mathbf{s}}(loc_x)\) and \(\mathsf{Enc}_{\mathbf{s}}(loc_y)\) respectively.

For the final output \(v\) to be \(\mathsf{Enc}_{\mathbf{s}}(1)\), both comparisons must return \(\mathsf{Enc}_{\mathbf{s}}(1)\). This means that \(x\) must equal \(loc_x\) and \(y\) must equal \(loc_y\). If either comparison fails, the result will be \(\mathsf{Enc}_{\mathbf{s}}(0)\) due to the \(\mathsf{HomAND}\) operation. 
\end{proof}

\subsection{Supporting Algorithms}
\noindent\textbf{(1) Zero-out Function.}
Once the validation result \(v\) is obtained, indicating whether the validation was successful (\(\mathsf{Enc}_{\mathbf{s}}(1)\)) or not (\(\mathsf{Enc}_{\mathbf{s}}(0)\)), this encrypted result can be used to zero out the service data \(S_i\). The function \(\mathsf{HomBitwiseAND}(v, \mathsf{Enc}_{\mathbf{s}}(S_i), \mathsf{evk})\) is applied to produce \(\mathsf{Enc}_{\mathbf{s}}(S_{\kappa})\), where \(S_{\kappa}\) is the desired locational information associated with \({loc}_{\kappa}\) at index \(\kappa\) (refer to \textsf{VeLoPIR} Algorithm~\ref{alg:VeLoPIR}). 

\begin{algorithm}[htbp]
\caption{\(\mathsf{HomBitwiseAND}(v, \mathsf{ct}, \mathsf{evk})\)}
\label{alg:HomBitwiseAND}

\tcc{\(l\) is the length of \(\mathsf{ct}\)}
\For{$i \leftarrow 1$ \KwTo $l$}{
    \(\mathsf{ct}[i] \leftarrow \mathsf{HomAND}(v, \mathsf{ct}[i], \mathsf{evk})\)\;
}
\Return \(\mathsf{ct}\)\;

\end{algorithm}

The \(\mathsf{HomBitwiseAND}\) function ensures that \(\mathsf{Enc}_{\mathbf{s}}(S_i)\) remains non-zero only when \(v\) is \(\mathsf{Enc}_{\mathbf{s}}(1)\), achieved by applying the \(\mathsf{HomAND}\) gate bitwise (see Algorithm~\ref{alg:HomBitwiseAND}). If \(v\) is \(\mathsf{Enc}_{\mathbf{s}}(0)\), then \(\mathsf{Enc}_{\mathbf{s}}(S_i)\) is zeroed out, meaning all bits are \(\mathsf{Enc}_{\mathbf{s}}(0)\) due to the \(\mathsf{HomAND}\) operation. This is summarized in Lemma~\ref{lem:bitAND}, with the proof omitted.

\begin{lemma}
\label{lem:bitAND}
The homomorphic function \(\mathsf{HomBitwiseAND}\) in Algorithm~\ref{alg:HomBitwiseAND} outputs \(\mathsf{ct}\) if \(v = \mathsf{Enc}_{\mathbf{s}}(1)\), and outputs \(\mathsf{Enc}_{\mathbf{s}}(0)\) for all bits of \(\mathsf{ct}\) if \(v = \mathsf{Enc}_{\mathbf{s}}(0)\). Given the encrypted single bit \(v\), the function effectively zeroes out the encrypted vector \(\mathsf{ct}\) based on the value of \(v\).
\end{lemma}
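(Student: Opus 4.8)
The plan is to reduce the lemma entirely to the correctness of the single-gate $\mathsf{HomAND}$ operation, exploiting the fact that Algorithm~\ref{alg:HomBitwiseAND} applies $\mathsf{HomAND}$ to each bit $\mathsf{ct}[i]$ independently, reusing the same control ciphertext $v$ in every call. Since the loop performs no carry propagation or cross-bit interaction, it suffices to analyze a single position and then quantify over all $l$ bits.

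First I would recall from Section~\ref{sec:background} that $\mathsf{HomAND}(\mathsf{ct}_1, \mathsf{ct}_2, \mathsf{evk})$ is realized as $\mathsf{Bootstrap}\bigl((\mathbf{0}, -\tfrac{1}{8}) + \mathsf{ct}_1 + \mathsf{ct}_2\bigr)$ and, provided the input noise magnitudes stay below $1/16$, correctly decrypts to the logical AND of the two underlying plaintext bits while refreshing the noise. Because $v$ and each $\mathsf{ct}[i]$ are valid TLWE ciphertexts (either fresh encryptions or outputs of prior bootstrapped gates), these preconditions hold at every invocation, so each assignment $\mathsf{ct}[i] \leftarrow \mathsf{HomAND}(v, \mathsf{ct}[i], \mathsf{evk})$ yields $\mathsf{Enc}_{\mathbf{s}}(v \cdot \mathsf{ct}[i])$, where $v$ and $\mathsf{ct}[i]$ denote the underlying plaintext bits and the product is taken over $\mathbb{B}$.

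I would then split on the plaintext value of $v$. If $v = \mathsf{Enc}_{\mathbf{s}}(1)$, then for each $i$ the output decrypts to $1 \cdot \mathsf{ct}[i] = \mathsf{ct}[i]$, so the returned ciphertext is, up to refreshed noise, an encryption of the original bitstring $\mathsf{ct}$. If $v = \mathsf{Enc}_{\mathbf{s}}(0)$, then each output decrypts to $0 \cdot \mathsf{ct}[i] = 0$, so every bit of the returned ciphertext is $\mathsf{Enc}_{\mathbf{s}}(0)$ and the vector is zeroed out. Quantifying over $i = 1, \dots, l$ then gives the claimed behavior for the entire ciphertext.

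There is no genuine mathematical obstacle here; the lemma is essentially a direct corollary of bitwise AND correctness. The only point requiring care is confirming that the noise preconditions of $\mathsf{HomAND}$ are satisfied at each step, which holds because every $\mathsf{TFHE}$ gate bootstraps (and hence refreshes) its output, and the inputs $v$ and $\mathsf{ct}[i]$ are themselves such refreshed ciphertexts or fresh encryptions. Correctness therefore propagates independently across all $l$ positions, and the lemma follows.
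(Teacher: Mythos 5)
Your proof is correct; the paper actually omits the proof of this lemma entirely (stating only the informal justification that the result follows from applying $\mathsf{HomAND}$ bitwise), and your argument---reducing to single-gate $\mathsf{AND}$ correctness, noting the absence of cross-bit interaction, and casing on the plaintext value of $v$---is exactly the reasoning the paper's surrounding text sketches. Your additional remark about the noise precondition $|e| < 1/16$ being maintained by bootstrapping is a sensible bit of diligence that the paper leaves implicit.
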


\noindent\textbf{(2) Aggregation.}
At the final stage of the \textsf{VeLoPIR} algorithm, we aggregate the results of \(\mathsf{Enc}_{\mathbf{s}}(S_i)\) that have been validated by the encrypted bit \(v\). Assuming no \( \text{loc}_i \) intersects with another \( \text{loc}_j \) for \( i \neq j \), the resulting set \( \{\mathsf{Enc}_{\mathbf{s}}(S_i)\}_{i=1}^{M} \) will contain only one non-zero element, \( \mathsf{Enc}_{\mathbf{s}}(S_{\kappa}) \), while the other \( \mathsf{Enc}_{\mathbf{s}}(S_i) \) values will be \( \mathsf{Enc}_{\mathbf{s}}(0) \) across all \( l_S \) bits.

\begin{algorithm}[htbp]
\caption{\(\mathsf{HomSum}(\{\mathsf{Enc}_{\mathbf{s}}(S_i)\}_{i=1}^{M}, \mathsf{evk})\)}
\label{alg:HomSum}

\tcc{Initialize \(S\) as a vector of encrypted zeros of length \(l_S\)}
\(S \leftarrow [\mathsf{Enc}_{\mathbf{s}}(0), \dots, \mathsf{Enc}_{\mathbf{s}}(0)]\)

\For{$i \leftarrow 1$ \KwTo $M$}{
    \For{$j \leftarrow 0$ \KwTo $l_S - 1$}{
        \(S[j] \leftarrow \mathsf{HomXOR}(S[j], \mathsf{Enc}_{\mathbf{s}}(S_i[j]), \mathsf{evk})\)\;
    }
}
\Return \(S\)\;

\end{algorithm}

Since all \( \mathsf{Enc}_{\mathbf{s}}(S_i) \) values are encrypted, the server cannot distinguish between the encryption of zeros and \( \mathsf{Enc}_{\mathbf{s}}(S_{\kappa}) \). Therefore, the server cannot naively select \( S_{\kappa} \) but must instead consider all possible cases. This is where homomorphic XORing comes into play through the \(\mathsf{HomSum}\) algorithm, which ensures that the correct \( S_{\kappa} \) is obtained without revealing any unnecessary information (see Algorithm~\ref{alg:HomSum}). The proof of Lemma~\ref{lem:HomSum} explains the correctness and necessity of the \(\mathsf{HomSum}\) operation.

\begin{lemma}
\label{lem:HomSum}
The homomorphic summation \(\mathsf{HomSum}\) in Algorithm~\ref{alg:HomSum} over a set \(\{\mathsf{Enc}_{\mathbf{s}}(S_i)\}_{i=1}^{M}\) correctly outputs \(\mathsf{Enc}_{\mathbf{s}}(S_\kappa)\), where the set \(\{\mathsf{Enc}_{\mathbf{s}}(S_i)\}_{i=1}^{M}\) contains only one non-zero \(\mathsf{Enc}_{\mathbf{s}}(S_\kappa)\), and the rest \(\mathsf{Enc}_{\mathbf{s}}(S_i)\) are encryptions of zeros across all \(l_S\) bits.
\end{lemma}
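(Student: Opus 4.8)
The plan is to establish correctness bit by bit: for each position $j \in \{0, \dots, l_S - 1\}$ I would show that the accumulator $S[j]$ eventually decrypts to the XOR of all $M$ input bits at that position, and then invoke the single-non-zero hypothesis to collapse this to $S_\kappa[j]$. The two ingredients I rely on are (i) the correctness of the TFHE gate $\mathsf{HomXOR}$ from Section~\ref{sec:background}, which guarantees $\mathsf{Dec}_{\mathbf{s}}(\mathsf{HomXOR}(\mathsf{ct}_a, \mathsf{ct}_b, \mathsf{evk})) = \mathsf{Dec}_{\mathbf{s}}(\mathsf{ct}_a) \oplus \mathsf{Dec}_{\mathbf{s}}(\mathsf{ct}_b)$ with refreshed noise, and (ii) the algebraic identity $a \oplus 0 = a$ in $\mathbb{B}$.

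First I would fix a bit position $j$ and prove by induction on the outer loop index $i$ that after the $i$-th iteration the accumulator satisfies $\mathsf{Dec}_{\mathbf{s}}(S[j]) = \bigoplus_{t=1}^{i} S_t[j]$. The base case follows from the initialization of $S[j]$ to $\mathsf{Enc}_{\mathbf{s}}(0)$, and the inductive step follows from the update rule $S[j] \leftarrow \mathsf{HomXOR}(S[j], \mathsf{Enc}_{\mathbf{s}}(S_i[j]), \mathsf{evk})$ combined with the correctness of $\mathsf{HomXOR}$. Because each $\mathsf{HomXOR}$ gate bootstraps its output, the noise of $S[j]$ is reset below the decryption threshold $1/16$ after every update, so the invariant persists through all $M$ iterations regardless of how large $M$ is.

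Next I would specialize the invariant at $i = M$ to obtain $\mathsf{Dec}_{\mathbf{s}}(S[j]) = \bigoplus_{t=1}^{M} S_t[j]$, and then apply the hypothesis that $S_t$ is the all-zero vector for every $t \neq \kappa$. Every term with $t \neq \kappa$ contributes $0$, so by repeated use of $a \oplus 0 = a$ the sum collapses to $S_\kappa[j]$; hence $S[j] = \mathsf{Enc}_{\mathbf{s}}(S_\kappa[j])$. Since $j$ was arbitrary, this holds across all $l_S$ bits, giving $S = \mathsf{Enc}_{\mathbf{s}}(S_\kappa)$ as claimed.

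The argument is largely mechanical, so the only point requiring real care is confirming that the single-non-zero precondition is precisely what makes XOR a correct aggregator: were two distinct indices $\kappa_1, \kappa_2$ simultaneously non-zero, the bitwise XOR would return $S_{\kappa_1} \oplus S_{\kappa_2}$ rather than either datum, which is why the non-intersecting-location assumption stated before the lemma is essential. I would flag this explicitly to separate the correctness claim from the privacy motivation for XOR aggregation discussed in the surrounding text.
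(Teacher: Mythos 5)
Your proposal is correct and follows essentially the same route as the paper's proof: both arguments reduce the aggregation to the identity $a \oplus 0 = a$, so that the only non-zero summand $S_\kappa[j]$ survives at every bit position. Your version is simply a more careful rendering of the same idea --- you make the loop invariant and induction explicit, note that per-gate bootstrapping keeps the noise below the decryption threshold across all $M$ iterations, and flag that the single-non-zero precondition (i.e., non-intersecting locations) is what makes XOR a valid aggregator --- all of which the paper leaves implicit.
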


\begin{proof}
Given that all \(\mathsf{Enc}_{\mathbf{s}}(S_i)\) except \(\mathsf{Enc}_{\mathbf{s}}(S_\kappa)\) are encryptions of zero across all \(l_S\) bits, the homomorphic XOR operation effectively acts as an addition. Specifically, the iterative operation \(S[j] \leftarrow \mathsf{HomXOR}(S[j], \mathsf{Enc}_{\mathbf{s}}(S_i[j]), \mathsf{evk})\) simplifies to:
\[
S[j] \leftarrow \mathsf{HomXOR}(\mathsf{Enc}_{\mathbf{s}}(0), \mathsf{Enc}_{\mathbf{s}}(S_\kappa[j]), \mathsf{evk})
\]
Since XORing any bit with 0 yields the original bit, the operation correctly outputs \(\mathsf{Enc}_{\mathbf{s}}(S_\kappa)\) for \(S\) in Algorithm~\ref{alg:HomSum}.
\end{proof}

\subsection{Correctness of VeLoPIR}

We can now conclude that the \textsf{VeLoPIR} circuit correctly evaluates and retrieves the desired locational information \(R = \mathsf{Enc}_{\mathbf{s}}(S_\kappa)\), corresponding to the encrypted coordinates \((\mathsf{Enc}_{\mathbf{s}}(x), \mathsf{Enc}_{\mathbf{s}}(y))\), as established through the series of preceding theorems and lemmas, culminating in Theorem~\ref{thm:VeLoPIR}.

\begin{theorem}
\label{thm:VeLoPIR}
The homomorphic location-based information retrieval algorithm, \(\mathsf{VeLoPIR}\), as described in Algorithm~\ref{alg:VeLoPIR}, correctly evaluates and retrieves the encrypted service data \(\mathsf{Enc}_{\mathbf{s}}(S_\kappa)\) corresponding to the encrypted locational coordinates \((\mathsf{Enc}_{\mathbf{s}}(x), \mathsf{Enc}_{\mathbf{s}}(y))\) using the evaluation key \(\mathsf{evk}\).
\end{theorem}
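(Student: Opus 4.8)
The plan is to establish correctness by composing the component results already proven, treating the main loop of Algorithm~\ref{alg:VeLoPIR} as a modular pipeline of validation, masking, and aggregation. First I would fix an arbitrary validation mode among $\mathsf{IntV}$, $\mathsf{CoV}$, and $\mathsf{IdM}$; since Theorem~\ref{thm:bb1} (for $\mathsf{IntV}$) and Theorem~\ref{thm:bb2} (for $\mathsf{CoV}$ and $\mathsf{IdM}$) each guarantee that the validation output $v_i$ equals $\mathsf{Enc}_{\mathbf{s}}(1)$ precisely when the client coordinates $(x,y)$ satisfy the membership or matching condition for $loc_i$, and $\mathsf{Enc}_{\mathbf{s}}(0)$ otherwise, the remaining argument is uniform across all three modes.

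Next I would invoke the non-intersection assumption introduced for the aggregation stage: since no two regions $loc_i, loc_j$ with $i \neq j$ overlap, there is a unique index $\kappa$ for which the membership condition holds. Consequently, after the loop has computed $v_i$ for every $i$, exactly one validation bit satisfies $v_\kappa = \mathsf{Enc}_{\mathbf{s}}(1)$, while $v_i = \mathsf{Enc}_{\mathbf{s}}(0)$ for all $i \neq \kappa$. I would then apply Lemma~\ref{lem:bitAND} to each $\mathsf{HomBitwiseAND}(v_i, \mathsf{Enc}_{\mathbf{s}}(S_i), \mathsf{evk})$ call: the $\kappa$-th service ciphertext is preserved bit-for-bit, whereas every other $\mathsf{Enc}_{\mathbf{s}}(S_i)$ is driven to an all-zero encryption across its $l_S$ bits. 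Hence the collection passed to aggregation contains a single non-zero element $\mathsf{Enc}_{\mathbf{s}}(S_\kappa)$ surrounded by encryptions of zero, which is exactly the hypothesis of Lemma~\ref{lem:HomSum}. Applying that lemma yields $R = \mathsf{HomSum}(\{\mathsf{Enc}_{\mathbf{s}}(S_i)\}_{i=1}^M, \mathsf{evk}) = \mathsf{Enc}_{\mathbf{s}}(S_\kappa)$, matching the claimed output.

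The bulk of the argument is thus a clean composition of earlier results, so I expect the main obstacle to lie not in the algebra but in the correctness side-conditions. Specifically, I must make explicit that the non-intersection assumption is precisely what forces the matched index to be unique: were two regions to overlap, $\mathsf{HomSum}$ would XOR several non-zero service vectors together and corrupt the retrieved data. I should also confirm that the $\mathsf{TFHE}$ noise budget is respected throughout the pipeline. Because every gate in the validation, masking, and summation circuits is bootstrapped, each intermediate ciphertext is refreshed to noise below the $1/16$ decryption threshold established in Section~\ref{sec:background}, so the correctness of the individual $\mathsf{HomAND}$, $\mathsf{HomXOR}$, and $\mathsf{HomMUX}$ operations---and therefore of the whole composition---is preserved regardless of the total circuit depth. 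Finally, I would note the degenerate case in which no region matches: then all $v_i = \mathsf{Enc}_{\mathbf{s}}(0)$, every $S_i$ is zeroed, and $R$ decrypts to the all-zero vector, which is the correct ``no result'' response consistent with the filtering semantics of the protocol.
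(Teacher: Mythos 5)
Your proof follows essentially the same route as the paper's: compose Theorems~\ref{thm:bb1} and~\ref{thm:bb2} for the validation bit, Lemma~\ref{lem:bitAND} for the zero-out step, and Lemma~\ref{lem:HomSum} for the aggregation. The additional points you raise---uniqueness of $\kappa$ via the non-intersection assumption, the bootstrapped noise budget, and the no-match degenerate case---are not spelled out in the paper's proof but only strengthen the argument; they do not change the approach.
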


\begin{proof}
First, by Theorems~\ref{thm:bb1} and \ref{thm:bb2}, the validation step in \(\mathsf{VeLoPIR}\) correctly evaluates whether the encrypted coordinates \(\mathsf{Enc}_{\mathbf{s}}(x)\) and \(\mathsf{Enc}_{\mathbf{s}}(y)\) match or fall within the location \(loc_\kappa\). The result of this validation is an encrypted bit \(v\) that equals \(\mathsf{Enc}_{\mathbf{s}}(1)\) if the coordinates match and \(\mathsf{Enc}_{\mathbf{s}}(0)\) otherwise. Next, using Lemma~\ref{lem:bitAND}, the \(\mathsf{HomBitwiseAND}\) function applies this validation result \(v\) to each encrypted service data \(\mathsf{Enc}_{\mathbf{s}}(S_i)\). This operation zeroes out all \(\mathsf{Enc}_{\mathbf{s}}(S_i)\) except for \(\mathsf{Enc}_{\mathbf{s}}(S_\kappa)\), which corresponds to the validated location \(loc_\kappa\). Finally, by Lemma~\ref{lem:HomSum}, the \(\mathsf{HomSum}\) operation correctly aggregates the results to produce \(R = \mathsf{Enc}_{\mathbf{s}}(S_\kappa)\). This ensures that only the service data corresponding to the validated location is returned.

\end{proof}
\section{Security and Privacy Analysis}
\noindent\textbf{Security and Privacy of VeLoPIR.}
We propose that the \textsf{VeLoPIR} system is both secure and private. This is because all operations are performed on encrypted data, with no decryption occurring during the evaluation process. The following theorem formalizes the security and privacy of our system.

\begin{theorem}
Assuming that the size of the database \(M\) and the precision parameters \(l_I\) and \(l_S\) are bounded by \(O(\lambda^k)\) for some constant \(k\), the proposed \textsf{VeLoPIR} system is \(\lambda\)-\textbf{secure} and preserves the \textbf{privacy} of the client's location.
\end{theorem}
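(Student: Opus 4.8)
The plan is to reduce both claims to the semantic security of the underlying $\mathsf{TLWE}$ encryption, so that any efficient semi-honest server distinguishing two candidate client locations would yield a $\mathsf{TLWE}$ distinguisher contradicting its $\lambda$-security. I would begin by isolating exactly what the server observes. Its view consists of the public key $\mathsf{pk}$ and evaluation key $\mathsf{evk}$ (both generated by \textsf{CL} and independent of the location), the encrypted database $\mathsf{Enc}_{\mathsf{pk}}(\mathbf{D})$ (which the server computes itself from its own plaintext data together with $\mathsf{pk}$), the transmitted ciphertexts $(\mathsf{Enc}_{\mathbf{s}}(x), \mathsf{Enc}_{\mathbf{s}}(y))$, and every intermediate ciphertext produced while running Algorithm~\ref{alg:VeLoPIR}. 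The crucial structural observation is that all of these quantities except the location ciphertexts are either location-independent or are efficiently computable, essentially deterministic functions of the received ciphertexts and the keys; hence the only location-dependent information enters the view through $(\mathsf{Enc}_{\mathbf{s}}(x), \mathsf{Enc}_{\mathbf{s}}(y))$.

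Next I would establish semantic security of a single ciphertext. From the $\mathsf{Enc}_{\mathbf{s}}$ definition, an encryption of a bit has the form $(\mathbf{a}, \langle \mathbf{a}, \mathbf{s}\rangle + \mu + e)$; by the $\mathsf{TLWE}$ assumption this is $2^{-\lambda}$-indistinguishable from a uniform sample $(\mathbf{a}, r)$, and adding any encoded message to a uniform $r$ leaves it uniform. Therefore $\mathsf{Enc}_{\mathbf{s}}(m_0)$ and $\mathsf{Enc}_{\mathbf{s}}(m_1)$ are each indistinguishable from the same uniform distribution, hence from each other, with advantage at most $2\cdot 2^{-\lambda}$. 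Since $x$ and $y$ are encrypted bitwise into $2 l_I$ ciphertexts, a standard hybrid argument that swaps one ciphertext at a time for a uniform sample shows that the encryptions of any two locations are indistinguishable with advantage at most $2 l_I \cdot 2 \cdot 2^{-\lambda}$; because $l_I = O(\lambda^k)$ this remains negligible in $\lambda$.

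Privacy then follows by a data-processing (composition) argument. The entire server view is $g(\mathsf{pk}, \mathsf{evk}, \mathbf{D}, \mathsf{Enc}_{\mathbf{s}}(x), \mathsf{Enc}_{\mathbf{s}}(y))$ for an efficiently computable $g$, so a server that distinguished the views induced by two locations could be composed with $g$ to distinguish the underlying location ciphertexts, contradicting the previous step. Equivalently, a simulator given only $\mathsf{pk}$, $\mathsf{evk}$, and the public parameters can encrypt an arbitrary dummy location and run the identical homomorphic evaluation to produce a view computationally indistinguishable from the real one; this simulatability is precisely location privacy. The $\lambda$-security claim is the same statement applied to all transmitted ciphertexts, and the polynomial bounds $M, l_I, l_S = O(\lambda^k)$ guarantee that the number of hybrid steps, and thus the accumulated advantage $\mathrm{poly}(\lambda)\cdot 2^{-\lambda}$, stays negligible.

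The step requiring the most care, and the main obstacle, is the evaluation key $\mathsf{evk}$: the bootstrapping and key-switching keys are encryptions of (functions of) the secret key $\mathbf{s}$ under $\mathbf{s}$ itself, so the bare $\mathsf{TLWE}$ assumption does not immediately cover a distinguisher that is handed $\mathsf{evk}$. I would resolve this by invoking the standard IND-CPA security of \textsf{TFHE} in the presence of the public evaluation key, i.e.\ the circular-security assumption underlying all \textsf{TFHE} deployments; under it the location ciphertexts remain indistinguishable even to an adversary holding $\mathsf{evk}$, and the composition argument goes through verbatim. The remaining obligations---checking that every homomorphic gate invoked across Algorithms~\ref{alg:InterValid}--\ref{alg:HomSum} is a fixed function of its ciphertext inputs and that no decryption ever occurs server-side---are routine bookkeeping.
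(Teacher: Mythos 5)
Your proposal is correct and rests on the same underlying idea as the paper's proof---reduce everything to the $\lambda$-security of individual $\mathsf{TLWE}$ samples and accumulate a $\mathrm{poly}(\lambda)\cdot 2^{-\lambda}$ loss over the polynomially many samples---but you execute it far more rigorously, and the comparison is instructive. The paper's argument is essentially a one-line union bound: it counts $W_s = O(M\, l_I\, l_S)$ $\mathsf{TLWE}$ samples appearing in the system, sums the per-sample advantages to get $O(W_s)/2^{\lambda}$, and then asserts privacy informally on the grounds that the query, all intermediate values, and the output are encrypted. You instead (i) prove IND-CPA-style indistinguishability of a single ciphertext from uniform and hybridize over the $2 l_I$ location ciphertexts, and (ii) derive privacy from a data-processing/simulation argument showing the server's entire view is an efficiently computable function of location-independent data plus the location ciphertexts. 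Step (ii) is a genuine strengthening: the paper never justifies why the intermediate ciphertexts produced during homomorphic evaluation leak nothing beyond the inputs, whereas your observation that they are deterministic functions of the view closes that gap. Most importantly, you correctly identify that $\mathsf{evk}$ contains encryptions of (functions of) $\mathbf{s}$ under $\mathbf{s}$ itself, so the bare $\mathsf{TLWE}$ assumption does not cover an adversary holding the bootstrapping and key-switching keys, and an explicit circular-security assumption is required. The paper's proof is silent on this point even though its own $\mathsf{KeyGen}$ hands $\mathsf{evk}$ to the server; your version makes the actual assumption base of the theorem explicit, which is the honest way to state it for any \textsf{TFHE}-based construction.
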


\begin{proof}
\noindent{\emph{(Security)}} Given that the size of the database \(M\) and the precision parameters \(l_I\) and \(l_S\) are bounded by \(O(\lambda^k)\) for some constant \(k\), the adversary's advantage \(\text{Adv}_{\text{TLWE}_{1 \text{ to } W}}^{\mathcal{A}}\) remains negligible:
\[
\text{Adv}_{\text{TLWE}_{1 \text{ to } W_s}}^{\mathcal{A}} = \sum_{i=1}^{W} \text{Adv}_{\text{TLWE}_i}^{\mathcal{A}} < \frac{O(W_s)}{2^\lambda}
\]
where \(W_s = O(M l_I l_S)\) denotes the number of $\mathsf{TLWE}$ samples occurring throughout the \textsf{VeLoPIR} system. Therefore, the system is \(\lambda\)-secure.

\noindent{\emph{(Privacy)}} Since the input locational information $(\mathsf{Enc}_{\mathbf{s}}(x), $ $\mathsf{Enc}_{\mathbf{s}}(y)) $of the client is encrypted, and all intermediate results as well as the final output \(R = \mathsf{Enc}_{\mathbf{s}}(S_\kappa)\) are encrypted under the client's secret key, no information about the client is leaked except for auxiliary information such as \(M\), \(l_S\), \(l_I\), and \(d\). Thus, the privacy of the user's location is preserved.

\end{proof}

\noindent \textbf{Threat 1: Auxiliary Information Leakage.}
One potential threat to the \textsf{VeLoPIR} protocol arises if the server obtains auxiliary information. This information could include the number of data entries $M$, the service precision $l_S$, the interval precision $l_I$, and the coordinate dimension $d$ from the client data. However, even if the adversary (which could be the server or a third party intercepting the communication between the client and server) gains access to these auxiliary $\mathsf{TLWE}$ samples, the adversary's ability to break the $\mathsf{TLWE}$ encryption and retrieve the secret key $\mathbf{s}$ remains negligible.

\begin{theorem}
\label{thm:aux_adv}
The advantage of an adversary $\mathcal{A}$, who has access to auxiliary public keys $\mathsf{pk}$ and information such as $M$, $l_S$, $l_I$, and $d$, is negligible. Specifically, for a polynomial number of samples $W = M  l_I d + M  l_S$, the advantage is bounded by:
\[
\text{Adv}_{\textsf{TLWE}_{1 \text{ to } W}}^{\mathcal{A}} < \frac{O(W)}{2^\lambda}
\]
where $W = M  l_I d + M  l_S$.
\end{theorem}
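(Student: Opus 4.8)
The plan is to establish the bound through a standard hybrid argument over the individual $\mathsf{TLWE}$ samples that constitute $\mathsf{pk}$, reducing the indistinguishability of the whole public key from uniform to the single-sample $\lambda$-security already assumed for the $\mathsf{TLWE}$ problem. First I would verify the sample count: inspecting $\mathsf{PubKeyGen}$ (Algorithm~\ref{alg:PubKeyGen}), the interval component $\mathsf{pk}^I$ emits one fresh sample for each triple $(i,j,k)$ with $i \in \{1,\dots,M\}$, $j \in \{0,\dots,l_I-1\}$, and $k \in \{1,\dots,d\}$, contributing $M l_I d$ samples, while the service component $\mathsf{pk}^S$ emits one sample per pair $(i,j)$ with $j \in \{0,\dots,l_S-1\}$, contributing $M l_S$ samples; hence $\mathsf{pk}$ is exactly a tuple of $W = M l_I d + M l_S$ samples drawn from $\mathcal{D}_1$ under a single shared secret $\mathbf{s}$. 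I would also observe at the outset that the auxiliary scalars $M$, $l_S$, $l_I$, and $d$ are merely size descriptors that are statistically independent of $\mathbf{s}$, so granting them to $\mathcal{A}$ cannot increase any distinguishing advantage; the entire argument therefore reduces to analyzing the $W$ samples.

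Next I would define a chain of hybrid distributions $H_0, H_1, \dots, H_W$ in which $H_0$ is the honest distribution of $\mathsf{pk}$ (all $W$ coordinates from $\mathcal{D}_1$) and $H_W$ replaces every coordinate by a uniform element of $\mathbb{T}^n \times \mathbb{T}$ (all coordinates from $\mathcal{D}_0$), with adjacent hybrids $H_{t-1}$ and $H_t$ differing only in that the $t$-th coordinate switches from real to uniform. The triangle inequality then gives $\text{Adv}_{\textsf{TLWE}_{1 \text{ to } W}}^{\mathcal{A}} \le \sum_{t=1}^{W} \bigl| \Pr[\mathcal{A}(H_{t-1}) = 1] - \Pr[\mathcal{A}(H_t) = 1] \bigr|$, and I would bound each summand by a single-sample $\mathsf{TLWE}$ advantage through a reduction $\mathcal{B}$ that embeds its challenge into coordinate $t$. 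Invoking the assumed bound $\text{Adv}_{\text{TLWE}}^{\mathcal{A}} \le 2^{-\lambda}$ on each of the $W$ steps yields $\text{Adv}_{\textsf{TLWE}_{1 \text{ to } W}}^{\mathcal{A}} \le W \cdot 2^{-\lambda} = O(W)/2^\lambda$, matching the claimed decomposition $\sum_{i=1}^{W}\text{Adv}_{\text{TLWE}_i}^{\mathcal{A}}$; the final polynomial-to-negligible step is immediate once $W$ is polynomial in $\lambda$ under the stated $O(\lambda^k)$ bounds on $M$, $l_I$, and $l_S$.

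The delicate point, and the one I expect to be the main obstacle, is justifying the adjacent-hybrid reduction $\mathcal{B}$ given that all $W$ coordinates share the same secret $\mathbf{s}$. Upon receiving a single challenge $(\mathbf{a}^*, b^*)$ drawn from $\mathcal{D}_0$ or $\mathcal{D}_1$, $\mathcal{B}$ must place it at position $t$, populate positions $1,\dots,t-1$ with independent uniform samples, and populate positions $t+1,\dots,W$ with genuine $\mathsf{TLWE}$ samples consistent with the \emph{same unknown} secret, so that $\mathcal{B}$ perfectly simulates $H_{t-1}$ when the challenge is real and $H_t$ when it is uniform. I would resolve this by appealing to the random self-reducibility of decision-$\mathsf{TLWE}$, under which fresh samples sharing the challenge's secret can be produced by re-randomization without knowledge of $\mathbf{s}$; this is exactly the property that licenses the per-sample decomposition and is the crux on which the tightness of the $O(W)/2^\lambda$ bound rests. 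Once the simulation is shown to be perfect in both branches, the remaining arithmetic is routine.
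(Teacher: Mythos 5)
Your proposal follows the same route as the paper: the paper's proof is precisely the per-sample union bound $\text{Adv}_{\textsf{TLWE}_{1 \text{ to } W}}^{\mathcal{A}} = \sum_{i=1}^{W}\text{Adv}_{\textsf{TLWE}_i}^{\mathcal{A}} < O(W)/2^{\lambda}$ with the count $W = M\,l_I\,d + M\,l_S$ read off from $\mathsf{PubKeyGen}$, which you reproduce and then justify more carefully through an explicit hybrid chain. The one place you go beyond the paper --- invoking random self-reducibility so the reduction can manufacture the remaining same-secret samples --- addresses a real subtlety the paper silently skips, though be aware that a single TLWE challenge cannot be re-randomized into fresh independent samples under the same unknown secret, so the cleaner fix is to state the underlying assumption in its standard many-sample form rather than lean on self-reducibility.
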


\begin{proof}
Given that the $\mathsf{TLWE}$ sample is $\lambda$-secure, the adversary's advantage for each individual sample is bounded by $2^{-\lambda}$. The total advantage across $W$ samples is:
\[
\text{Adv}_{\textsf{TLWE}_{1 \text{ to } W}}^{\mathcal{A}} = \sum_{i=1}^{W} \text{Adv}_{\textsf{TLWE}_i}^{\mathcal{A}} < \frac{O(W)}{2^\lambda}
\]
Since $W = M  l_I d + M  l_S$ grows polynomially with the input size, the adversary's total advantage remains negligible, ensuring the security of the protocol.
\end{proof}

\noindent\textbf{Threat 2: Server Access to Raw Data.}
Another potential threat arises when the server has raw access to the dataset $\mathbf{D} = \{({loc}_i, {S}_i)\}_{i=1}^{M}$, which contains service data associated with specific locational coordinates. This access is necessary for the server to provide the relevant services to the client, as it aggregates this information from local data generators. However, the primary objective of the \textsf{VeLoPIR} protocol is to protect the user's location, not the service data $S_i$ itself (which the server already possesses). Consequently, the ability of the adversary (in this case, the server) to solve the $\mathsf{TLWE}$ sample and obtain the user's secret key $\mathbf{s}$ remains negligible. Lemma~\ref{cor:raw_data} directly follows from Theorem~\ref{thm:aux_adv}.

\begin{corollary}
\label{cor:raw_data}
The advantage of the server in the \textsf{VeLoPIR} protocol, given access to the service data $\mathbf{D} = \{({loc}_i, {S}_i)\}_{i=1}^{M}$, is negligible:
\[
\text{Adv}_{\textsf{TLWE}_{1 \text{ to } W}}^{\mathcal{A}} < \frac{O(W)}{2^\lambda}
\]
where $W = M  l_I d + M  l_S$, as defined in Theorem~\ref{thm:aux_adv}.
\end{corollary}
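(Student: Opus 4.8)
The plan is to reduce the corollary directly to Theorem~\ref{thm:aux_adv} by arguing that the server's plaintext knowledge of the database $\mathbf{D} = \{(loc_i, S_i)\}_{i=1}^{M}$ yields no additional power toward recovering the secret key $\mathbf{s}$ beyond what is already captured by the $W$ auxiliary $\mathsf{TLWE}$ samples. The guiding observation is that \textsf{VeLoPIR} is designed to protect the client's \emph{location}, not the service payloads $S_i$; since the server assembles $\mathbf{D}$ from local data generators, it legitimately holds every $loc_i$ and $S_i$ in the clear. The only quantity whose secrecy underlies the privacy guarantee is the client's secret key $\mathbf{s}$, so it suffices to show that the adversary's advantage in the $\mathsf{TLWE}$ distinguishing game is unchanged when plaintext–ciphertext pairs are handed over.

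First I would make the subtraction argument explicit. By the $\mathsf{ServerEnc}$ construction (Algorithm~\ref{alg:ServerEnc}), each database ciphertext has the form $\mathsf{ct} = (\mathbf{a}, b)$ with $b = \langle \mathbf{a}, \mathbf{s} \rangle + \mathsf{Ecd}(m) + e$, where $m$ is the known plaintext bit and $e \leftarrow \chi$. Given $m$, the adversary can compute $(\mathbf{a},\, b - \mathsf{Ecd}(m)) = (\mathbf{a},\, \langle \mathbf{a}, \mathbf{s} \rangle + e)$, which is exactly a sample drawn from the distribution $\mathcal{D}_1$ of the $\mathsf{TLWE}$ problem. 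Hence any efficient procedure that exploits the plaintext–ciphertext pairs can be simulated by an equally efficient procedure that is given only the $W = M l_I d + M l_S$ underlying $\mathsf{TLWE}$ samples, with no loss in success probability. This count matches the two contributions exactly: the $M l_I d$ location/interval ciphertexts from $\mathsf{pk}^I$ and the $M l_S$ service ciphertexts from $\mathsf{pk}^S$.

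With this equivalence established, the conclusion follows by the same hybrid/union-bound reasoning as in Theorem~\ref{thm:aux_adv}. Each individual sample is $\lambda$-secure, so its distinguishing advantage is at most $2^{-\lambda}$; summing over all $W$ samples gives
\[
\text{Adv}_{\textsf{TLWE}_{1 \text{ to } W}}^{\mathcal{A}} = \sum_{i=1}^{W} \text{Adv}_{\textsf{TLWE}_i}^{\mathcal{A}} < \frac{O(W)}{2^\lambda},
\]
and since $W$ is polynomial in the input size this quantity is negligible. The main (and essentially only) obstacle is the subtraction step: one must verify that knowing the plaintext and subtracting its encoding really does leave a clean $\mathcal{D}_1$ sample rather than some correlated object that could leak information about $\mathbf{s}$. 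Because the masking vector $\mathbf{a}$ and the noise $e$ are freshly and independently sampled in $\mathsf{PubKeyGen}$ (Algorithm~\ref{alg:PubKeyGen}), the residual $(\mathbf{a}, \langle \mathbf{a}, \mathbf{s}\rangle + e)$ is distributed exactly as a fresh $\mathsf{TLWE}$ sample, so no such correlation arises and the bound from Theorem~\ref{thm:aux_adv} transfers verbatim.
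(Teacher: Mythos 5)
Your proposal is correct and follows essentially the same route as the paper: the paper likewise observes that each database ciphertext is just a public-key $\mathsf{TLWE}$ sample shifted by the known encoding $\mathsf{Ecd}(m)$, so the server's task collapses to attacking the $W = M\,l_I\,d + M\,l_S$ samples already covered by Theorem~\ref{thm:aux_adv}. Your version merely spells out the subtraction/simulation step that the paper leaves implicit, which is a welcome clarification but not a different argument.
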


\begin{proof}
    See Appendix~\ref{appendix:proof_raw}.
\end{proof}

\noindent\textbf{Threat 3: Insecure Communication Channel from Satellite to Client.}
A potential threat to the \textsf{VeLoPIR} protocol is the vulnerability of GPS information during transmission (refer to Figure~\ref{fig:evaluation}) from the satellite (\textsf{SA}) to the client (\textsf{Cl}), such as in GPS spoofing~\cite{spoof-1, spoof-2} and jamming attacks~\cite{jam-1, jam-2}. These attacks could alter the user's location before it is encrypted and processed by \textsf{VeLoPIR}, potentially rendering the retrieved information inaccurate. However, this threat is beyond the scope of the \textsf{VeLoPIR} protocol, which focuses on protecting the user's location data post-encryption. \textsf{VeLoPIR} ensures that location-based queries are securely processed without revealing the user's location to the server or any other party during the evaluation process.

Securing the communication channel between the satellite and the client is an ongoing research focus, particularly within the realm of Post-Quantum Cryptography (PQC). Efforts are being made to establish a secure network using PQC algorithms~\cite{pqc-survey, pqc-cyber, pqc-iot}, including Lattice-based cryptographic schemes like those based on the LWE problem. Given that the \textsf{VeLoPIR} protocol is also built on LWE, implementing LWE-based PQC schemes for secure satellite-to-client communication would not only enhance security but also ensure seamless integration with \textsf{VeLoPIR}.
\section{VeLoPIR Optimization}
\label{sec:optimization}

\noindent \textbf{Bit-Level Parallel Acceleration.} In logic-based FHE schemes like \textsf{TFHE}, homomorphic gate operations, such as $\mathsf{HomAND}$, are significantly slower than their plaintext equivalents. For example, evaluating a homomorphic AND gate, which involves steps like $\mathsf{BlindRotate}$, $\mathsf{SampleExtract}$, and $\mathsf{KeySwitch}$, takes around 13 ms, whereas a plaintext gate takes only 0.3 ns—making the homomorphic operation about 40 million times slower. This considerable overhead makes \textsf{TFHE} well-suited for parallel optimization, as CPU and GPU parallelism can substantially improve performance. We leverage this observation to optimize the core algorithms in \textsf{VeLoPIR}.

\subsection{Core Functionality Optimizations}

The \textsf{VeLoPIR} circuit can be optimized at multiple levels, ranging from bitwise operations to high-level structural enhancements. We categorize these into three levels of parallel optimization---\textbf{Innermost}, \textbf{Mid-Level}, and \textbf{Outermost}---corresponding to the hierarchical structure of computation within the system (refer to Table~\ref{tab:hardware_complexity} for a summary of these optimization levels and the associated components).

% \subsection{Bitwise-Level Optimization}

\noindent \textbf{(1) Innermost Optimization: Nonlinear HE Operations.}
Optimizing nonlinear HE operations like \(\mathsf{HomCompLE}\), \(\mathsf{HomCompL}\), and \(\mathsf{HomEQ}\) can significantly enhance the performance of the \textsf{VeLoPIR} system. 

\noindent \emph{Homomorphic Comparsions.} In the \(\mathsf{HomCompLE}\) algorithm (Algorithm~\ref{alg:HomCompLE}), the following loop is key:
\begin{align*}
    t_1 &\leftarrow \mathsf{HomXNOR}(\mathsf{ct_1}[i], \mathsf{ct_2}[i], \mathsf{evk}) \\
    t_2 &\leftarrow \mathsf{HomMUX}(t_1, t_2, \mathsf{ct_2}[i], \mathsf{evk})
\end{align*}
Here, \(t_2\) is sequentially updated based on the result of the previous \(t_1\) value. This sequential dependency means that the \(t_2\) updates cannot be parallelized. However, the \(\mathsf{HomXNOR}\) operations for each bit position are independent and can be executed in parallel using parallel processing units (see Algorithm~\ref{alg:HomCompLeOPT}).

\begin{algorithm}[htbp]
\caption{\(\mathsf{HomCompLeOPT}(\mathsf{ct_1}, \mathsf{ct_2}, \mathsf{evk}, n_p)\)}
\label{alg:HomCompLeOPT}

% Compare sign bits to check if the signs are different
\(t_0 \leftarrow \mathsf{HomXOR}(\mathsf{ct_1}[l_I - 1], \mathsf{ct_2}[l_I - 1], \mathsf{evk})\)\;

% Assume \(\mathsf{ct_1} \leq \mathsf{ct_2}\) initially
\(t_1 \leftarrow \mathsf{Enc}_{\mathbf{s}}^{\mathsf{TLWE}}(\mathbf{0}, \frac{1}{8})\)\;

% Parallel XNOR operation for all bits except the sign bit
\ForEach{ $i \leftarrow 0$ \KwTo $l_I - 2$}{
    \tcc{allocate parallel processing units}
    \(t_{\text{XNOR}}[i] \leftarrow \mathsf{HomXNOR}(\mathsf{ct_1}[i], \mathsf{ct_2}[i], \mathsf{evk})\)\;
}

% Sequentially determine the final comparison result using MUX operations
\For{$i \leftarrow 0$ \KwTo $l_I - 2$} {
    \(t_1 \leftarrow \mathsf{HomMUX}(t_{\text{XNOR}}[i], t_1, \mathsf{ct_2}[i], \mathsf{evk})\)\;
}

% Determine the final result based on sign comparison
\(r \leftarrow \mathsf{HomMUX}(t_0, \mathsf{ct_1}[l_I - 1], t_1, \mathsf{evk})\)\;

\Return \(r\)\;

\end{algorithm}

Similarly, \(\mathsf{HomCompL}\) can be optimized in the same manner as \(\mathsf{HomCompLE}\), improving performance by parallelizing the independent \(\mathsf{HomXNOR}\) operations. Both \(\mathsf{HomCompL}\) and \(\mathsf{HomCompLE}\) utilize parallel resources to their maximum capacity, up to \( l_I \) units.

\noindent \emph{Homomorphic Equality.} The \(\mathsf{HomEQ}\) algorithm contains no dependencies between the outcomes, allowing parallel computation of \(\mathsf{HomXNOR}\) on the input ciphertexts \(\mathsf{ct_1}[i]\) and \(\mathsf{ct_2}[i]\) up to \(l_I\), producing \(t_1[i]\) values. These values can then be reduced in parallel using pairwise \(\mathsf{HomAND}\) operations, reaching a depth of \(\log(l_I)\). The \(\mathsf{HomEqOPT}\) algorithm can utilize up to \(l_I\) parallel processing units.

\begin{algorithm}[htbp]
\caption{\(\mathsf{HomEqOPT}(\mathsf{ct_1}, \mathsf{ct_2}, \mathsf{evk}, n_p)\)}
\label{alg:HomEquiOPT}

% Parallel XNOR operation: Compute XNOR for each bit and store in \(t_{\text{XNOR}}\)
\ForEach{ $i \leftarrow 0$ \KwTo $l_I - 1$}{
    \tcc{allocate parallel processing units}
    \(t_{\text{1}}[i] \leftarrow \mathsf{HomXNOR}(\mathsf{ct_1}[i], \mathsf{ct_2}[i], \mathsf{evk})\)\;
}

% Parallel reduction with AND operation
\tcc{Parallel reduction}
\For{\( k \leftarrow 1\) \KwTo \( l_I\) \KwBy $\times2$}{
    \For{$i \leftarrow 0$ \KwTo $l_I - 1$ \KwBy $2k$}{
        \If{$i + k < l_I$}{
            \(t_{\text{1}}[i] \leftarrow \mathsf{HomAND}(t_{\text{1}}[i], t_{\text{1}}[i + k], \mathsf{evk})\)\;
        }
    }
}

\Return \(r \leftarrow t_{\text{1}}[0]\)\;

\end{algorithm}

\noindent \textbf{(2) Mid-Level Optimization: \(\mathsf{HomBitwiseAND}\).}
A straightforward optimization approach involves allocating parallel resources to the $\mathsf{HomAND}$ gate evaluation in Algorithm~\ref{alg:HomBitwiseAND}:
\begin{equation}
\mathsf{ct}[i] \leftarrow \mathsf{HomAND}(v, \mathsf{ct}[i], \mathsf{evk}).
\end{equation}
Since VeLoPIR requires the evaluation of $\mathsf{HomBitwiseAND}(v$ $, \mathsf{Enc}_{\mathbf{s}}(S_i),$ $ \mathsf{evk})$, this operation necessitates $l_S$ parallel resources for maximum parallelization.

\noindent \textbf{(3) Mid-Level Optimization: Validation Modules (\textsf{IntV}, \textsf{CoV}).} 
\noindent \emph{Interval Validation.} Validation in \textsf{IntV} can be optimized by parallelizing the computation of independent variables \(v_{x_{left}}, v_{x_{right}}, v_{y_{left}},\) and \(v_{y_{right}}\). These variables, which utilize similar operations to \(\mathsf{HomCompLE}\) and \(\mathsf{HomCompL}\), can be computed simultaneously. We can parallel process the following operations line by line:
\begin{align*}
v_{x_{left}} &\leftarrow \mathsf{HomCompLE}(\mathsf{Enc}_{\mathbf{s}}(x_{left}), \mathsf{Enc}_{\mathbf{s}}(x), \mathsf{evk})\; \\
v_{x_{right}} &\leftarrow \mathsf{HomCompL}(\mathsf{Enc}_{\mathbf{s}}(x), \mathsf{Enc}_{\mathbf{s}}(x_{right}), \mathsf{evk})\; \\
v_{y_{left}} &\leftarrow \mathsf{HomCompLE}(\mathsf{Enc}_{\mathbf{s}}(y_{left}), \mathsf{Enc}_{\mathbf{s}}(y), \mathsf{evk})\; \\
v_{y_{right}} &\leftarrow \mathsf{HomCompL}(\mathsf{Enc}_{\mathbf{s}}(y), \mathsf{Enc}_{\mathbf{s}}(y_{right}), \mathsf{evk})\; 
\end{align*}
Additionally, \(v_x\) and \(v_y\) can be computed independently afterward. The algorithm can be parallelized using up to \(2d\) resources.

\noindent \emph{Coordinate Validation.} 
\textsf{CoV} can also benefit from parallel processing by pairing the independent \(v_x\) and \(v_y\) operations, both of which use the \(\mathsf{HomEQ}\) function. This allows the use of up to \(d\) parallel processing units.

\noindent \textbf{(4) Outermost Optimization: Large-Scale Evaluation.}
The \textsf{VeLoPIR} evaluation can be optimized by leveraging \(M\) parallel processors to handle the large-scale processing unit, specifically the evaluation of each \({S}_i\) in the \textsf{VeLoPIR} circuit. Since each \({S}_i\) operates independently and requires similar computational effort, parallelization ensures efficient processing. The optimization can be designed as follows:

\tcc{allocate parallel processing units}
\ForEach{ $i \leftarrow 0$ \KwTo $M$}{
    
    \textbf{Validation $v$ $\&$ Zero Out Unrelated Data:}\; 
    
    \ForEach{${S}_i$}{
        ${S}_i \leftarrow \mathsf{HomBitwiseAND}(v, \mathsf{Enc}_{\mathbf{s}}({S}_i), \mathsf{evk})$\;
    }
}

\noindent \textbf{(5) Outermost Optimization: Aggregation via HomSum.}
The \(\mathsf{HomSum}\) algorithm can be optimized using two approaches: parallel reduction and Bit-AND optimization. First, we apply pairwise evaluation of the \(\mathsf{HomXOR}\) operation on the encrypted service data \(\mathsf{Enc}_{\mathbf{s}}(S_i[j])\) using parallel reduction across \(M\) processing units. This reduces the number of operations logarithmically with respect to \(M\).

Next, for each pairwise \(\mathsf{HomXOR}\) evaluation, we apply Bit-AND optimization, utilizing up to \(l_S\) parallel units for the operation:
\begin{equation*}
    \mathsf{Enc}_{\mathbf{s}}(S_i[j]) \leftarrow \mathsf{HomXOR}(\mathsf{Enc}_{\mathbf{s}}(S_i[j]), \mathsf{Enc}_{\mathbf{s}}(S_{i+k}[j]), \mathsf{evk})
\end{equation*}
where \(k\) increases from 1 to \(\log_2(M)\) during pairwise evaluation. Thus, the maximum parallel processing units required for \(\mathsf{HomSum}\) optimization is \( (M \cdot l_S) /2\).

\section{Theoretical Resource Complexity}

\noindent \textbf{Parallel Processing Resource Complexity.} In \textsf{VeLoPIR}, we utilize parallel processing to enhance algorithm efficiency. The resource complexity is computed by considering the hierarchical structure of the algorithms, such as \(\mathsf{HomCompLE}\) and \(\mathsf{HomCompL}\) within \textsf{IntV}. Table~\ref{tab:hardware_complexity} outlines the maximum number of parallel units required for each component. The overall parallel processing resource complexity simplifies to:
\[
O(M \cdot (l_S + d \cdot l_I))
\]
This accounts for the dominant factors affecting parallel execution across the algorithm’s components.

\begin{table}[htb!]
\centering
\caption{Parallel resource complexity and hierarchy in \textsf{VeLoPIR}. Levels refer to the scope of parallelism: \textbf{Outermost} (across records), \textbf{Mid-Level} (across validation blocks), and \textbf{Innermost} (within comparison primitives).}
\begin{tabularx}{\linewidth}{Xcc}
\toprule
\textbf{Algorithm}            & \textbf{Processing Units ($n_p$)}  & \textbf{Parallel Level} \\ 
\midrule
\(\mathsf{Large\text{-}Scale}\)                & \( M \)                          & Outermost \\
\(\mathsf{HomSum}\)                            & \( \frac{M l_S}{2} \)            & Outermost \\ 
\(\mathsf{HomBitwiseAND}\)                     & \( l_S \)                        & Mid-Level \\
\textsf{IntV}   & \( 2d \)                         & Mid-Level \\
\textsf{CoV} / \textsf{IdM}                    & \( d \, (1) \)                   & Mid-Level \\
\(\mathsf{HomCompLE} \, / \, \mathsf{HomCompL}\) & \( l_I \)                      & Innermost \\
\(\mathsf{HomEQ}\)                              & \( l_I \)                        & Innermost \\
\bottomrule
\end{tabularx}
\label{tab:hardware_complexity}
\end{table}

\noindent \textbf{Theoretical Speed-up Bound.}
Theoretically, the maximum possible speed-up for \textsf{VeLoPIR} is \(O(M \cdot (l_S + d \cdot l_I))\) as with the parallel processing resource complexity. This bound assumes that all parts of the algorithm can be fully parallelized without overhead. However, as we will show in the evaluation section, the actual performance is affected by factors such as communication overhead between the GPU and CPU. This limits the real-world speed-up compared to the theoretical maximum.

\section{Application Scenarios}

We explore three privacy-preserving strategies for private information alerts and location-based services: \textsf{IntV}, \textsf{CoV}, and \textsf{IdM}.

\noindent \textbf{(1) Bounding Box Validation (\textsf{IntV})}. Bounding box validation is an efficient strategy when the user’s location needs to be verified within a certain geographic boundary without revealing their exact position. This is particularly relevant in scenarios like pandemic disease alerts, where the client is interested in receiving alerts about nearby confirmed cases without explicitly disclosing their location. We demonstrate the performance of the \textsf{IntV} approach using datasets such as \texttt{covid-usa} and \texttt{covid-kor}.

\noindent \textbf{(2) Exact Coordinate Validation (\textsf{CoV})}. In certain cases, the user needs highly precise information about their exact geographic position. This approach is essential for applications like crime alerts or disaster management, where precise location-based notifications are crucial for immediate response. We use the \texttt{gdacs} dataset to demonstrate the efficiency of \textsf{CoV} in these situations.

\noindent \textbf{(3) Identifier Matching (\textsf{IdM})}. In many modern datasets, especially those related to healthcare or meteorological data, information is often categorized by administrative regions such as cities or states. In these cases, validating a user’s location based on an identifier (e.g., city or state name) can be more efficient and relevant than using geographic coordinates. \textsf{IdM} leverages this structure to provide efficient and privacy-preserving location matching. We showcase the performance of \textsf{IdM} with the \texttt{covid-usa}, \texttt{covid-kor}, and \texttt{weather-us} datasets.

\section{Evaluation}
We evaluate \textsf{VeLoPIR} with the following key questions:
\begin{itemize}
    \item \textbf{RQ1. Parallelization.} Can the core components of \textsf{VeLoPIR}---including outermost, mid-level, and innermost levels of parallelism---be efficiently parallelized across CPU cores and GPUs? Which hardware platform (CPU or GPU) is optimal for the \textsf{VeLoPIR} circuit?

    \item \textbf{RQ2. Efficiency.} How much speed-up can be achieved through our core optimizations across different datasets?

    \item \textbf{RQ3. Applicability.} Which operational mode (\textsf{IntV}, \textsf{CoV}, or \textsf{IdM}) is best suited for each real-world application scenario?
\end{itemize}

\noindent \textbf{Reference Benchmark.} In this work, we use \textsf{LocPIR}~\cite{yoo-torus} as a reference benchmark for comparison. We evaluate \textsf{VeLoPIR} to demonstrate its extended functionality and optimized performance for practical applications. All evaluations are conducted using real-world datasets (summarized in Table~\ref{tab:dataset_summary}) to ensure that the improvements in \textsf{VeLoPIR} are measured against practical, relevant data.

\subsection{Environment Setup}

\noindent\textbf{Hardware.} Experiments were conducted on a 13th Gen Intel Core i9-13900K processor (24 cores, 32 threads, 5.8 GHz max frequency). The system operated on Ubuntu 24.04 LTS, utilizing \textsf{TFHE} library version 1.1. For GPU parallel processing, the system was equipped with an NVIDIA GeForce RTX 4060 Ti GPU (16 GB GDDR6 memory) running CUDA version 12.4. 

\noindent \textsf{TFHE} \textbf{Parameters.} Our model employed a standard security level of 128-bit ($\lambda_{128}$) as in~\cite{tfhe-2} (for details, see Appendix~\ref{appendix:tfhe_params}).

\noindent\textbf{Dataset (refer to Tab.~\ref{tab:dataset_summary}).} The \texttt{covid-kor} dataset~\cite{covid-kor}, provided by the Korea Disease Control and Prevention Agency (KDCA), contains daily records of COVID-19 patient incidences across nine major cities in Korea as of October 26, 2021.
The \texttt{covid-usa} dataset~\cite{covid-us} contains daily confirmed COVID-19 cases across U.S. states, recorded on February 11, 2023.
The \texttt{gdacs} dataset~\cite{gdacs} is sourced from the Global Disaster Alert and Coordination System (GDACS), capturing disaster information (e.g., earthquakes, tsunamis, etc.) for hazard assessment.
Lastly, the \texttt{weather-usa} dataset~\cite{weather-usa} includes daily weather reports for U.S. cities, specifically from January 3, 2016. 

% \begin{table}[!h]
% \centering
% \caption{Summary of Real World Datasets}
% \begin{tabular}{|c|c|c|c|c|}
% \hline
% \textbf{Dataset} & \textbf{$M$} & \textbf{$l_I$} & \textbf{$l_S$} & \textbf{Description} \\ \hline
% \texttt{covid-kor~\cite{covid-kor}} & 9 & 16 & 9 & COVID-19 patients in Korea (2021-10-26) \\ \hline
% \texttt{covid-usa~\cite{covid-us}} & 58 & 6 & 22 & COVID-19 patients in the US (2023-02-11) \\ \hline
% \texttt{gdacs~\cite{gdacs}} & 9 & 16 & 16 & Global Disaster Alert and Coordination System (2023-09-03) \\ \hline
% \texttt{weather-usa~\cite{weather-usa}} & 304 & 9 & 128 & Weather data for cities in the US (2016-01-03) \\ \hline
% \end{tabular}
% \label{tab:dataset_summary}
% \end{table}

\begin{table}[!h]
\centering
\caption{Summary of Real World Datasets}
\resizebox{\linewidth}{!}{  % Adjust the width here, or use \textwidth
\begin{tabular}{|c|c|c|c|c|}
\hline
\textbf{Dataset} & $M$ & $l_I$ & $l_S$ & \textbf{Description} \\ \hline
\texttt{covid-kor~\cite{covid-kor}} & 9 & 16 & 9 & COVID-19 patients (Korea) \\ \hline
\texttt{covid-usa~\cite{covid-us}} & 58 & 16 & 22 & COVID-19 patients (US) \\ \hline
\texttt{gdacs~\cite{gdacs}} & 9 & 16 & 16 & Disaster Alert (Global) \\ \hline
\texttt{weather-usa~\cite{weather-usa}} & 304 & 16 & 128 & Weather data (US) \\ \hline
\end{tabular}
}
\label{tab:dataset_summary}
\end{table}

\subsection{Core Functionality Optimizations (RQ1)}

% \begin{figure*}[!htb]
%     \centering
    
%     \begin{subfigure}{0.32\textwidth}
%         \centering
%         \includegraphics[width=\textwidth]{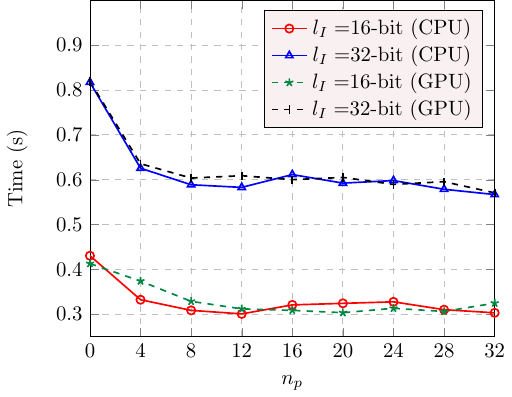}
%         \caption{$\mathsf{HomCompLE}$}
%         \label{fig:comple}
%     \end{subfigure}
%     \hfill
%     \begin{subfigure}{0.32\textwidth}
%         \centering
%         \includegraphics[width=\textwidth]{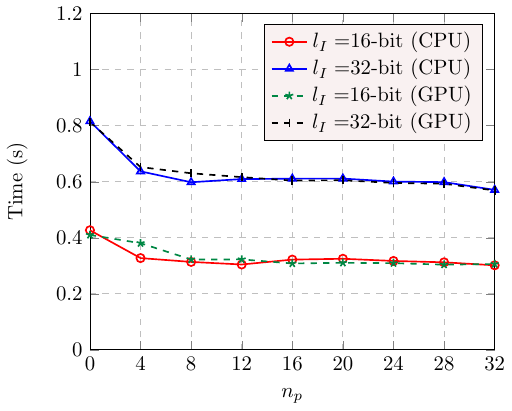}
%         \caption{$\mathsf{HomCompL}$}
%         \label{fig:compl}
%     \end{subfigure}
%     \hfill
%     \begin{subfigure}{0.32\textwidth}
%         \centering
%         \includegraphics[width=\textwidth]{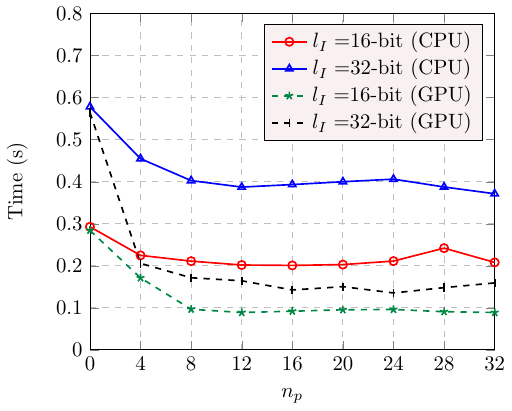}
%         \caption{$\mathsf{HomEQ}$}
%         \label{fig:equi}
%     \end{subfigure}

%     \begin{subfigure}{0.32\textwidth}
%         \centering
%         \includegraphics[width=\textwidth]{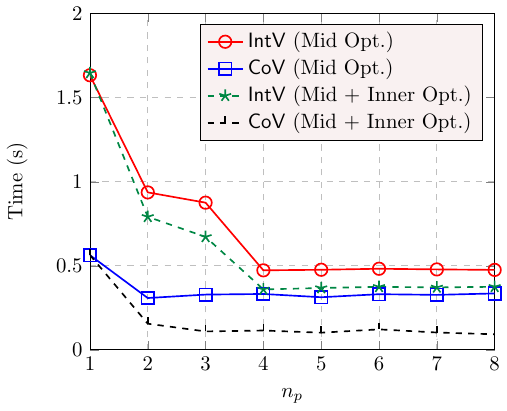}
%         \caption{Building Blocks (\textsf{BB1} and \textsf{BB2})}
%         \label{fig:bb1_bb2}
%     \end{subfigure}
%     \hfill
%     \begin{subfigure}{0.32\textwidth}
%         \centering
%         \includegraphics[width=\textwidth]{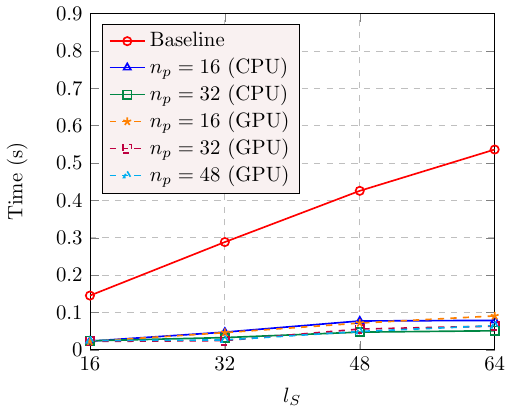}
%         \caption{$\mathsf{HomBitwiseAND}$}
%         \label{fig:bit_AND}
%     \end{subfigure}
%     \hfill
%     \begin{subfigure}{0.32\textwidth}
%         \centering
%         \includegraphics[width=\textwidth]{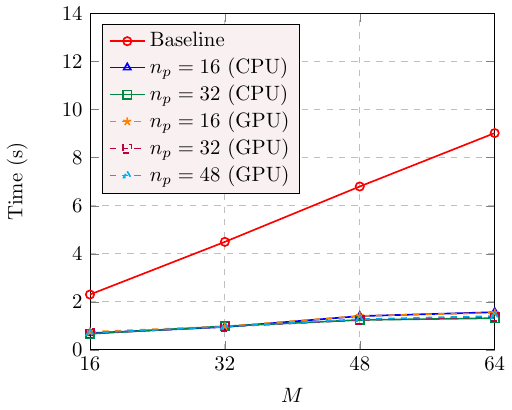}
%         \caption{$\mathsf{HomSum}$}
%         \label{fig:HomSum}
%     \end{subfigure}
    
%     \caption{Combined time performance comparisons across different algorithms and input sizes under a security level of $\lambda = 128$.}
%     \label{fig:combined}
% \end{figure*}

\begin{figure*}[!htb]


    \centering
    
    \begin{subfigure}{0.24\textwidth}
        \centering
        \includegraphics[width=\textwidth]{pdfs/compLE_128.pdf}
        \caption{$\mathsf{HomCompLE}$}
        \label{fig:comple}
    \end{subfigure}
    \hfill
    \begin{subfigure}{0.24\textwidth}
        \centering
        \includegraphics[width=\textwidth]{pdfs/compL_128.pdf}
        \caption{$\mathsf{HomCompL}$}
        \label{fig:compl}
    \end{subfigure}
    \hfill
    \begin{subfigure}{0.24\textwidth}
        \centering
        \includegraphics[width=\textwidth]{pdfs/equi_128.pdf}
        \caption{$\mathsf{HomEQ}$}
        \label{fig:equi}
    \end{subfigure}
    \hfill
    \begin{subfigure}{0.24\textwidth}
        \centering
        \includegraphics[width=\textwidth]{pdfs/bb1_bb2.pdf}
        \caption{$\mathsf{IntV}$ and $\mathsf{CoV}$}
        \label{fig:bb1_bb2}
    \end{subfigure}
    
    \caption{Combined time performance comparison across different algorithms and parallel processing units ($n_p$) for $\lambda=128$.}
    \label{fig:combined}
\end{figure*}

\begin{figure}[htb]
    \centering
    \includegraphics[width=0.85\linewidth]{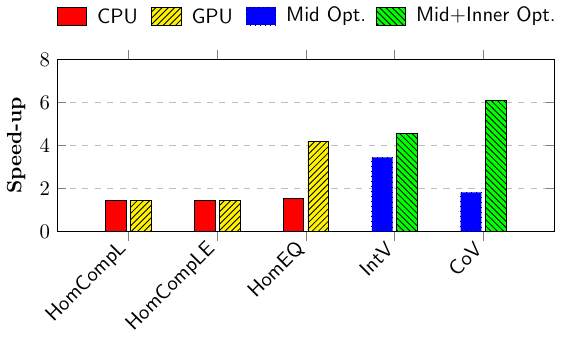}
    \caption{Speed-up summary for mid-level and innermost optimizations across core operations in \textsf{VeLoPIR}.}
    \label{fig:speedup_summary}
\end{figure}

\noindent \textbf{Innermost Optimization.} 
As shown in Fig.~\ref{fig:comple}, Fig.~\ref{fig:compl}, and Fig.~\ref{fig:equi}, the performance of \textsf{HomCompLE} and \textsf{HomCompL} is similar on both CPU and GPU, with minor variations across bit sizes. In contrast, \textsf{HomEQ} demonstrates a clear advantage on GPU, significantly outperforming CPU in both 16-bit and 32-bit operations. For \textsf{HomCompLE} and \textsf{HomCompL}, the maximum speed-up at $n_p = 8$ is modest (32-bit CPU/GPU: $1.43 \times$, 16-bit CPU: $1.41 \times$, 16-bit GPU: $1.35 \times$). For \textsf{HomEQ}, the speed-up on GPU is more pronounced, reaching $3.20 \times$ for 16-bit and $4.17 \times$ for 32-bit.

All three functions (\textsf{HomCompLE}, \textsf{HomCompL}, and \textsf{HomEQ}) reach optimal performance at $n_p = 8$, beyond which further increases in parallel processing units yield diminishing returns. This is attributed to overheads such as communication between processing units. While the theoretical complexity suggests a linear increase in speed-up with additional processing units, practical limitations, including synchronization and communication costs, prevent this from being realized. Thus, allocating 8 processing units, with a preference for GPU in the case of \textsf{HomEQ}, provides the best balance between performance and resource efficiency.

\noindent \textbf{Mid-Level Optimization: Validation Modules.} 
In our experiment, we first applied optimization at the mid-level of the computation hierarchy, focusing on the operational modes \textsf{IntV} and \textsf{CoV} (denoted as \textsf{IntV} (Mid Opt.) and \textsf{CoV} (Mid Opt.) in Fig.~\ref{fig:bb1_bb2}). In this setting, parallel processing units were allocated to each validation instance. For \textsf{IntV}, the maximum speed-up was observed at $n_p = 4$, achieving a $3.45\times$ improvement over the non-optimized baseline. We further applied inner-level optimization—targeting homomorphic comparison functions such as \(\mathsf{HomCompLE}\)—in a GPU-accelerated configuration referred to as \textsf{IntV} (Mid + Inner Opt.) in Fig.~\ref{fig:bb1_bb2}. This setting achieved a maximum speed-up of $4.54\times$ using $n_p = 32$ processing units. These results demonstrate that \textsf{IntV} benefits from both mid-level and innermost-level optimizations, though gains diminish beyond $n_p = 4$.

For \textsf{CoV}, the maximum speed-up was obtained at $n_p = 2$, achieving a $1.83\times$ improvement over the non-optimized baseline. Similar to \textsf{IntV}, the GPU-accelerated version---denoted as \textsf{CoV} (Mid + Inner Opt.) in Fig.~\ref{fig:bb1_bb2}---yielded a significantly higher speed-up of $6.09\times$. This demonstrates the effectiveness of homomorphic equality operations (e.g., \(\mathsf{HomEQ}\)) when parallelized on GPU architectures. These results are consistent with our theoretical parallel resource complexity summarized in Table~\ref{tab:hardware_complexity}, where \(d = 4\) for \textsf{IntV} and \(d = 2\) for \textsf{CoV}, corresponding to the number of parallel validation steps.

The observed speed-ups suggest that our theoretical model closely aligns with the experimental findings. Based on this evaluation, the optimal number of processing units was determined to be $n_p = 4$ for \textsf{IntV} and $n_p = 2$ for \textsf{CoV}, with additional performance gains enabled by innermost-level optimizations in the GPU-enhanced configurations. A summary of speed-ups across key operations—including \(\mathsf{HomCompLE}\), \(\mathsf{HomCompL}\), \(\mathsf{HomEQ}\), \textsf{IntV}, and \textsf{CoV}—is presented in Fig.~\ref{fig:speedup_summary}.

\begin{figure}[!htb]


    \centering
    \begin{subfigure}{0.23\textwidth}
        \centering
        \includegraphics[width=\textwidth]{pdfs/bit_AND.pdf}
        \caption{$\mathsf{HomBitwiseAND}$ by $l_S$}
        \label{fig:bit_AND}
    \end{subfigure}
    \hfill
    \begin{subfigure}{0.23\textwidth}
        \centering
        \includegraphics[width=\textwidth]{pdfs/sum_time.pdf}
        \caption{$\mathsf{HomSum}$ by $M$}
        \label{fig:HomSum}
    \end{subfigure}
    \caption{Time performance comparison of $\mathsf{HomBitwiseAND}$ and $\mathsf{HomSum}$ across different (a) service lengths ($l_S$) and (b) data sizes ($M$), with CPU and GPU optimizations.}
\end{figure}

\noindent \textbf{Mid-Level Optimization: HomBitwiseAND.} We evaluated the performance of the \textsf{HomBitwiseAND} algorithm by varying the service length (\(l_S\)) and the number of parallel processing units (\(n_p\)), including GPU acceleration. The results, as shown in Fig.~\ref{fig:bit_AND}, demonstrate a significant time performance improvement compared to the baseline. Both CPU and GPU optimizations achieve similar performance for smaller service lengths, particularly when \(n_p = 16\) or \(n_p = 32\), indicating that either approach can be used effectively depending on the available hardware. However, as \(n_p\) increases, we observed linear scaling of performance improvements up to the point where \(n_p \leq l_S\), beyond which further increases in parallel units provide diminishing returns. This is particularly evident with larger service lengths, where GPU acceleration initially shows superior performance compared to CPU, reaching a maximum speed-up of \(11.66 \times\) at \(l_S = 32\) (see Fig.~\ref{fig:speedup_bitwiseand}). Despite this, for even larger service lengths, such as \(l_S = 48\) and \(l_S = 64\), the benefits of increasing \(n_p\) on the GPU begin to decline. This decrease in performance is due to memory bandwidth limitations and data transfer overhead between the CPU and GPU. Thus, while GPU optimization remains preferable for larger workloads, the effectiveness is constrained by memory bottlenecks.

\begin{table*}[htb!]
\centering
\footnotesize
\caption{Execution time and speed-up across different optimization levels for various private information retrieval strategies using real-world datasets. The experiments evaluate bounding box validation (\textsf{IntV}), coordinate validation (\textsf{CoV}), and identifier matching (\textsf{IdM}) across four datasets.}
\begin{tabularx}{\textwidth}{|c|c|c|c|c|c|>{\centering\arraybackslash}X|}
\hline
\textbf{Dataset} & \textbf{Application Scenario} & \textbf{Method} & \textbf{Baseline (None)} & \textbf{Outermost} & \textbf{Outer + Mid Opt.} & \textbf{All} \\ \hline
\multirow{2}{*}{\texttt{covid-kor~\cite{covid-kor}}} 
  & Pandemic Alert & \textsf{IntV}  & $16.252$  & $3.304 (4.92 \times)$  & $2.783 (5.84 \times)$  & $2.512$ \textcolor{blue}{($6.47 \times$)}  \\ \cline{2-7}
  & City Matching & \textsf{IdM} & $4.44331$ & $1.47247 (3.02 \times)$ & $1.21494 (3.66 \times)$ & $1.13875$ \textcolor{blue}{($3.90 \times$)} \\ \hline
\multirow{2}{*}{\texttt{covid-usa}~\cite{covid-us}} 
  & Pandemic Alert & \textsf{IntV} & $113.44$  & $10.6831 (10.62 \times)$ & $10.1528 (11.17 \times)$ & $9.82627$ \textcolor{blue}{($11.55 \times$)} \\ \cline{2-7}
  & State Matching & \textsf{IdM} & $68.4045$ & $8.57437 (7.98 \times)$ & $7.8827 (8.68 \times)$ & $7.95777$ \textcolor{blue}{($8.59 \times$)} \\ \hline
\multirow{1}{*}{\texttt{gdacs}~\cite{gdacs}} 
  & Disaster Alert & \textsf{CoV} & $7.82023$ & $1.8384 (4.25 \times)$ & $1.70333 (4.59 \times)$ & $1.12841$ \textcolor{blue}{($6.93 \times$)} \\ \hline
\multirow{1}{*}{\texttt{weather-usa}~\cite{weather-usa}} 
  & Weather Info & \textsf{IdM} & $725.38$ & $67.3917 (10.76 \times)$ & $65.5437 (11.07 \times)$ & $65.7115$ \textcolor{blue}{($11.04 \times$)} \\ \hline
\end{tabularx}
\label{tab:execution_time_comparison}
\end{table*}

\noindent \textbf{Outermost Optimization: Aggregation via HomSum.} We assessed the performance of the \textsf{HomSum} algorithm by varying the number of data elements ($M$) and the number of parallel processing units (\(n_p\)), including GPU acceleration. As shown in Fig.~\ref{fig:HomSum}, both CPU and GPU implementations demonstrate significant performance gains compared to the baseline. Fig.~\ref{fig:speedup_homsum} further illustrates the speed-up achieved across different dataset sizes (\(M\)). As the number of data elements increases (from \(M = 16\) to \(M = 64\)), the speed-up remains consistent, with up to \(6.85 \times\) improvement on CPU with \(n_p = 32\). This is primarily due to the parallel reduction technique used in \textsf{HomSum}, which reduces the dataset by half with each iteration, enabling efficient scaling with \(M\). The number of processing units (\(n_p\)) was sufficient for the dataset sizes tested, avoiding any resource bottlenecks. Additionally, CPU and GPU performance are closely matched across all dataset sizes. For smaller datasets (\(M = 16\), \(M = 32\)), the CPU slightly outperforms the GPU, while for larger datasets (\(M = 48\), \(M = 64\)), the GPU has a marginal edge. However, the differences are minimal, demonstrating that both CPU and GPU offer comparable levels of acceleration for this task.
\begin{figure}[htb]
    \centering
    % First subfigure: Speed-up comparison for HomBitwiseAND operation
    \begin{subfigure}[t]{0.23\textwidth}
        \centering
        \includegraphics[width=\textwidth]{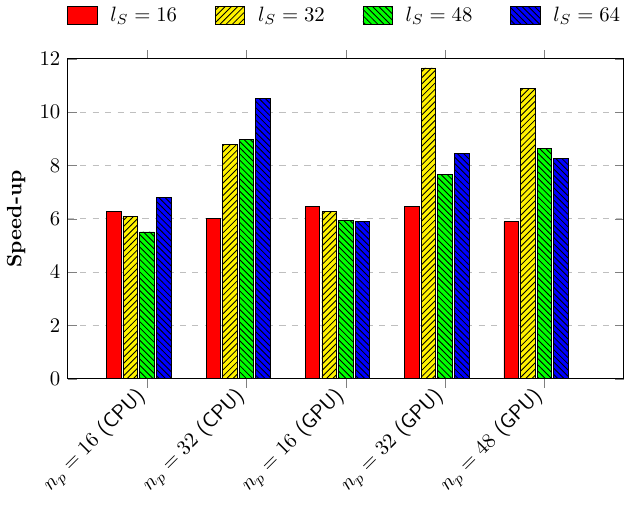}
        \caption{$\mathsf{HomBitwiseAND}$}
        \label{fig:speedup_bitwiseand}
    \end{subfigure}
    \hfill
    % Second subfigure: Speed-up comparison for HomSum operation
    \begin{subfigure}[t]{0.23\textwidth}
        \centering
        \includegraphics[width=\textwidth]{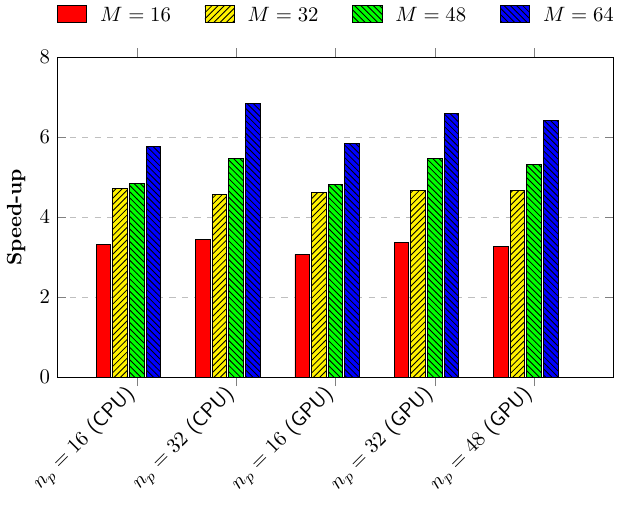}
        \caption{$\mathsf{HomSum}$}
        \label{fig:speedup_homsum}
    \end{subfigure}
    
    % Main caption summarizing the figure
    \caption{Speed-up comparison of different (a) service lengths ($l_S$) and (b) data sizes ($M$) with CPU or GPU optimization.}
\end{figure}

\subsection{Performance and Efficiency Analysis (RQ2)}

\noindent \textbf{Optimization Strategy.} In our evaluation, we implement three levels of optimization to enhance the efficiency of \textsf{VeLoPIR}, corresponding to the system's computational hierarchy (see Table~\ref{tab:hardware_complexity} for details). Outermost optimization targets large-scale processing and aggregation steps, parallelizing operations across records using GPU resources. Mid-level optimization focuses on the validation modes \textsf{IntV} and \textsf{CoV}, as well as the \textsf{HomBitwiseAND} operation, leveraging CPU and GPU parallelism where appropriate. Innermost optimization targets fundamental FHE comparison primitives, such as \(\mathsf{HomCompLE}\), \(\mathsf{HomCompL}\), and \(\mathsf{HomEQ}\), which are accelerated using GPU to maximize bit-level parallelism.

\noindent \textbf{{VeLoPIR} ({IntV}) Efficiency Compared to Baseline.} From Table~\ref{tab:execution_time_comparison}, we observe significant performance improvements achieved by our optimized \textsf{VeLoPIR} over the baseline.

For the \texttt{covid-kor} dataset, used in~\cite{yoo-torus}, we evaluated \textsf{VeLoPIR} using all three optimization levels under the \textsf{IntV} mode, consistent with the baseline setup. Our optimized version reduced the execution time from 16.252 seconds to 2.512 seconds, resulting in a $6.47\times$ overall speed-up (see Fig.~\ref{fig:speedup_bb1}). The most significant contribution came from outermost optimization, yielding a $4.92\times$ speed-up, while mid-level and innermost optimizations provided additional, though smaller, improvements.

For the \texttt{covid-usa} dataset, which is significantly larger (with 58 entries compared to 9 in \texttt{covid-kor}), outermost optimization again showed the most impact, achieving a speed-up of $10.62\times$. The larger dataset size is well-suited for GPU resource allocation at the outer loop level, enabling efficient parallel reduction across multiple records. However, for this dataset, the gains from mid-level and innermost optimizations diminished, likely due to overhead from CPU-GPU memory transmission as the dataset size and service length increased. This contrast between the \texttt{covid-kor} and \texttt{covid-usa} datasets highlights the scalability of outermost optimization, while also suggesting potential bottlenecks in deeper optimizations for larger inputs (see the comparison of mid-level and innermost speed-ups between \texttt{covid-kor} and \texttt{covid-usa} in Fig.~\ref{fig:speedup_bb1}).

\begin{figure}[htb]
    \centering
    % First subfigure: Speed-up comparison using IntV
    \begin{subfigure}[t]{0.21\textwidth}
        \centering
        \includegraphics[width=\textwidth]{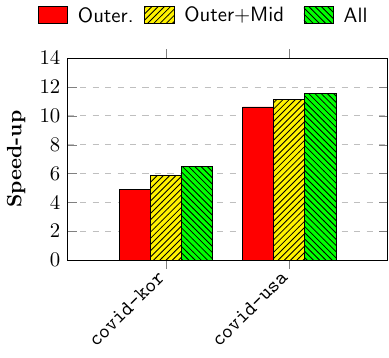}
        \caption{\textsf{VeLoPIR} using \textsf{IntV}}
        \label{fig:speedup_bb1}
    \end{subfigure}
    \hfill
    % Second subfigure: Speed-up comparison using IdM
    \begin{subfigure}[t]{0.25\textwidth}
        \centering
        \includegraphics[width=\textwidth]{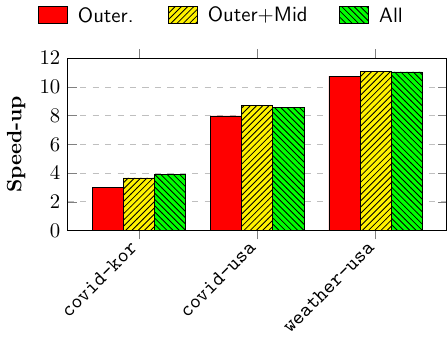}
        \caption{\textsf{VeLoPIR} using \textsf{IdM}}
        \label{fig:speedup_bb2+}
    \end{subfigure}
    
    % Main caption summarizing the figure
    \caption{Speed-up comparison across different optimization levels in \textsf{VeLoPIR}.}
    \label{fig:speedup_comparison}
\end{figure}

\noindent \textbf{Efficiency Evaluation of {VeLoPIR} ({IdM}).} 
We conducted experiments using the \textsf{IdM} mode across three datasets---\texttt{covid-kor}, \texttt{covid-usa}, and \texttt{weather-usa}---to evaluate the scalability of our optimization strategy with respect to dataset size (\(M\)) and service length (\(l_S\)). The speed-up results are illustrated in Fig.~\ref{fig:speedup_bb2+}, with detailed execution times provided in Table~\ref{tab:execution_time_comparison}.

As the dataset size increases, the overall speed-up also improves. For the smaller \texttt{covid-kor} dataset, we achieved a speed-up of $3.90\times$, while the \texttt{covid-usa} dataset yielded $8.59\times$. The largest dataset, \texttt{weather-usa}, achieved the highest speed-up of $11.04\times$, demonstrating the scalability of the \textsf{IdM} mode under our optimization framework.

Notably, the \texttt{weather-usa} dataset (\(M = 304\), \(l_S = 128\)) underscores the benefit of large-scale parallelism, especially at the outermost and mid-level optimization levels. However, we observed diminishing returns from innermost optimization as the dataset size increased. For instance, in the \texttt{covid-kor} dataset, innermost optimization contributed a modest gain (from $3.66\times$ to $3.90\times$), whereas for \texttt{covid-usa} and \texttt{weather-usa}, the speed-up slightly declined (from $8.68\times$ to $8.59\times$ and from $11.07\times$ to $11.04\times$, respectively). This suggests that GPU-accelerated innermost operations can become bottlenecked by memory bandwidth limitations at scale. Addressing this issue would require improved hardware support for more efficient data transfer between CPU and GPU.

\noindent \textbf{Efficiency Evaluation of {VeLoPIR} ({CoV}).} We conducted experiments using the \textsf{CoV} mode with the \texttt{gdacs} dataset (\(M = 9\), \(l_S = 16\)). A total speed-up of $6.93\times$ was achieved, primarily driven by outermost optimization, which contributed $4.25\times$. Additional gains were obtained through mid-level and innermost optimizations, with innermost optimization alone yielding a $2.34\times$ improvement via parallelized \(\mathsf{HomEQ}\) operations. Notably, no significant memory bandwidth issues or diminishing returns were observed in this setting, likely due to the relatively small dataset size and balanced CPU-GPU data transfer overhead.

\noindent \textbf{Scalability in Practical Settings.} While \textsf{VeLoPIR}'s design allows for high degrees of parallelism, its real---world scalability is affected by memory bandwidth constraints---particularly between CPU and GPU. As observed in Fig.~\ref{fig:bit_AND}, the performance of \(\mathsf{HomBitwiseAND}\), a mid-level optimization component, begins to saturate as the number of parallel units \(n_p\) increases under a fixed service length \(l_S\). A similar saturation trend is seen in Fig.~\ref{fig:HomSum} for \(\mathsf{HomSum}\), which belongs to the outermost optimization layer, as the dataset size \(M\) is held constant. These effects result in a divergence between theoretical and empirical speed-up, as summarized in Table~\ref{tab:execution_time_comparison}. 

In contrast, more lightweight primitive operations such as \(\mathsf{HomCompLE}\) and \(\mathsf{HomEQ}\), shown in Fig.~\ref{fig:combined}, continue to follow expected scaling behavior, particularly for modest parallelism levels (\(n_p = 4\) or \(8\)). This is because these primitives are less dependent on high-throughput memory access and thus less affected by interconnect bottlenecks. These results suggest that for large-scale deployments of \textsf{VeLoPIR}, improvements in memory bandwidth---especially between host and device---would allow the system to more closely approach its theoretical parallel performance bounds.

\subsection{Operational Mode Applicability (RQ3)}

\noindent \textbf{Information Alerts.} In scenarios such as pandemic alerts (\texttt{covid-kor} and \texttt{covid-usa} datasets), both \textsf{IntV} and \textsf{IdM} modes can be applied. However, \textsf{IdM} is generally more efficient, as it directly compares encrypted locational identifiers (e.g., city or state names) using homomorphic equality (\(\mathsf{HomEQ}\)), thereby avoiding bounding box computations and reducing circuit depth.

This performance advantage is evident in our results (see Fig.~\ref{fig:bb1_bb2+_application} and Fig.~\ref{fig:bb1_bb2+_opt_speedup}). For the \texttt{covid-kor} dataset, the baseline execution time for \textsf{IntV} was 16.252 seconds, which our optimized version reduced to 2.512 seconds—a $6.47\times$ speed-up. In contrast, \textsf{IdM} reduced the baseline of 4.443 seconds to 1.138 seconds, yielding a $3.90\times$ improvement.

The scalability of these modes is further highlighted by the results on the \texttt{covid-usa} dataset. Given its larger size, \textsf{IntV} achieved an $11.55\times$ speed-up (from 113.44 to 9.826 seconds), while \textsf{IdM} reduced the baseline from 68.404 to 7.957 seconds, resulting in an $8.59\times$ improvement.

\begin{figure}[htb]
    \centering
    \begin{subfigure}[t]{0.23\textwidth}
        \centering
        \includegraphics[width=\textwidth]{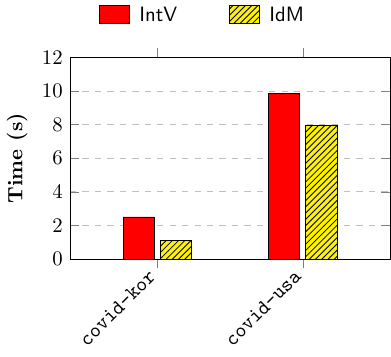}
        \caption{Execution Time (\textsf{IntV} vs. \textsf{IdM})}
        \label{fig:bb1_bb2+_application}
    \end{subfigure}
    \hfill
    \begin{subfigure}[t]{0.23\textwidth}
        \centering
        \includegraphics[width=\textwidth]{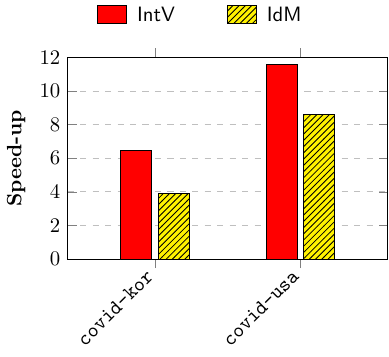}
        \caption{Speed-up (\textsf{IntV} vs. \textsf{IdM})}
        \label{fig:bb1_bb2+_opt_speedup}
    \end{subfigure}
    \caption{Time performance and speed-up comparison for \textsf{IntV} and \textsf{IdM} modes using the \texttt{covid-kor} and \texttt{covid-usa} datasets (information alert scenario).}
    \label{fig:bb1_bb2+_results}
\end{figure}

\noindent \textbf{Emergency Alerts.} The \textsf{CoV} mode is well-suited for emergency alert scenarios, as demonstrated by our experiments using the real-world \texttt{gdacs} dataset (Global Disaster Alert and Coordination System). Although relatively small in size (\(M = 9\), \(l_S = 16\)), the dataset represents realistic use cases such as satellite-based disaster monitoring. Its structure aligns well with the targeted nature of \textsf{CoV}, which enables exact coordinate-based validation for timely alerts.

The speed-up results indicate that \textsf{CoV} is highly efficient, reducing retrieval time to just 1.1284 seconds---a $6.93\times$ improvement over the baseline. This demonstrates the effectiveness of \textsf{CoV} in rapidly and accurately delivering alerts for specific geographic locations in disaster scenarios, without the overhead associated with more general-purpose modes like \textsf{IntV} or \textsf{IdM}.

Moreover, the \textsf{CoV} mode is broadly adaptable to other emergency use cases, such as crime alerts or evacuation warnings, where users in predefined coordinates require precise and immediate notifications.

\section{Conclusion}

We introduced \textsf{VeLoPIR}, a fast and versatile location-based PIR system that protects user locational privacy in real-world information and emergency alert scenarios. Through a set of flexible operational modes, \textsf{VeLoPIR} efficiently handles diverse datasets while preserving privacy. Performance evaluations on real-world datasets demonstrated significant speed-ups through parallel processing on both CPU and GPU, confirming the system's scalability. We also provided formal correctness proofs along with comprehensive security and privacy analyses, validating \textsf{VeLoPIR}'s effectiveness and robustness in practical deployments.

% \section*{Data Availability Statement}
% The availability of data is addressed within this manuscript.

% \section*{Conflict of Interest}
% The authors declare that there is no conflict of interest related to the publication of this paper.

% \clearpage

\bibliographystyle{plain}

\appendix
\appendix

\section{CORRECTNESS PROOF}
\subsection{Proof of Theorem 1}
\label{appendix:proof_LE}

\begin{proof}

The main part of the \(\mathsf{HomCompLE}\) algorithm relies on the combination of homomorphic gates, specifically the \(\mathsf{HomXNOR}\) gate followed by the \(\mathsf{HomMUX}\) gate, to evaluate the encrypted inputs.

First, the algorithm initializes \(t_2\) with a trivial TLWE ciphertext \(\mathsf{Enc}_{\mathbf{s}}^{\mathsf{TLWE}}(\mathbf{0}, \frac{1}{8})\), which is an encryption of \(1\) under the secret key \(\mathbf{s}\). This assumes \(\mathsf{ct_1} \leq \mathsf{ct_2}\) initially.

If \(\mathsf{ct_1}\) and \(\mathsf{ct_2}\) are equal, then for all bits, the \(\mathsf{HomXNOR}\) gate will output \(t_1 = \mathsf{Enc}_{\mathbf{s}}(1)\), preserving \(t_2 = \mathsf{Enc}_{\mathbf{s}}(1)\) throughout the loop. The final output, \(r\), becomes \(\mathsf{Enc}_{\mathbf{s}}(1)\), confirming \(\mathsf{ct_1} \leq \mathsf{ct_2}\) as expected.

Consider the highest bit \(j\) where \(\mathsf{ct_1}[j]\) and \(\mathsf{ct_2}[j]\) differ:
\begin{equation*}
\begin{aligned}
&\text{(Case 1): } \mathsf{ct_1}[j] = 1, \mathsf{ct_2}[j] = 0 \\
&\text{(Case 2): } \mathsf{ct_1}[j] = 0, \mathsf{ct_2}[j] = 1
\end{aligned}
\end{equation*}
\begin{itemize}
    \item Case 1. Here, \(\mathsf{HomXNOR}\) outputs \(t_1 = \mathsf{Enc}_{\mathbf{s}}(0)\). Consequently, \(\mathsf{HomMUX}\) selects \(\mathsf{ct_2}[j]\), setting \(t_2 = \mathsf{Enc}_{\mathbf{s}}(0)\). Thus, the result is \(\mathsf{Enc}_{\mathbf{s}}(0)\), correctly indicating that \(\mathsf{ct_1} > \mathsf{ct_2}\).
    \item Case 2. In this scenario, \(\mathsf{HomXNOR}\) outputs \(t_1 = \mathsf{Enc}_{\mathbf{s}}(0)\), leading \(\mathsf{HomMUX}\) to select \(\mathsf{ct_2}[j]\), thus \(t_2 = \mathsf{Enc}_{\mathbf{s}}(1)\). Therefore, the result is \(\mathsf{Enc}_{\mathbf{s}}(1)\), correctly showing that \(\mathsf{ct_1} < \mathsf{ct_2}\).
\end{itemize}

The \(\mathsf{HomMUX}\) at the end considers \(t_0\), which indicates differing signs between \(\mathsf{ct_1}\) and \(\mathsf{ct_2}\). If signs differ, \(t_0\) directs the selection to \(\mathsf{ct_1}[l_I - 1]\). This ensures correct handling of signed numbers:
\begin{itemize}
    \item If \(\mathsf{ct_1}\) is negative and \(\mathsf{ct_2}\) is positive, \(\mathsf{ct_1} < \mathsf{ct_2}\) holds, setting the result as \(\mathsf{Enc}_{\mathbf{s}}(1)\).
    \item Conversely, if \(\mathsf{ct_1}\) is positive and \(\mathsf{ct_2}\) is negative, \(\mathsf{ct_1} > \mathsf{ct_2}\), leading to \(\mathsf{Enc}_{\mathbf{s}}(0)\).
\end{itemize}

\end{proof}

\subsection{Proof of Corollary 1}
\label{appendix:proof_L}

\begin{proof}
The algorithm \(\mathsf{HomCompL}(\mathsf{ct_1}, \mathsf{ct_2}, \mathsf{evk})\) builds upon the logic established in Theorem~\ref{thm:comple} for the \(\mathsf{HomCompLE}\) algorithm. The key difference in \(\mathsf{HomCompL}\) lies in the initialization of the variable \(t_2\), which is set to \(\mathsf{Enc}_{\mathbf{s}}^{\mathsf{TLWE}}(\mathbf{0}, -\frac{1}{8})\), corresponding to an encryption of 0.

In the \(\mathsf{HomCompLE}\) algorithm, \(t_2\) was initialized to an encryption of 1, representing the assumption that \(\mathsf{ct_1} = \mathsf{ct_2}\). Here, by initializing \(t_2\) to 0, we instead assume that \(\mathsf{ct_1} \neq \mathsf{ct_2}\). The iterative loop in \(\mathsf{HomCompL}\) uses the \(\mathsf{HomXNOR}\) and \(\mathsf{HomMUX}\) gates to compare the bits of \(\mathsf{ct_1}\) and \(\mathsf{ct_2}\). If all bits of \(\mathsf{ct_1}\) and \(\mathsf{ct_2}\) are equal, \(t_2\) remains as 0, which is the correct result since \(\mathsf{ct_1} = \mathsf{ct_2}\) means \(\mathsf{ct_1} < \mathsf{ct_2}\) is false.

The variable \(t_0\), as in \(\mathsf{HomCompLE}\), accounts for cases where \(\mathsf{ct_1}\) and \(\mathsf{ct_2}\) have different signs. If the signs differ, \(t_0\) directs the output based on the sign of \(\mathsf{ct_1}\). This ensures that the algorithm correctly handles all cases where \(\mathsf{ct_1} < \mathsf{ct_2}\), including when the signs are different.

Thus, by initializing \(t_2\) to 0, \(\mathsf{HomCompL}\) ensures that when \(\mathsf{ct_1} = \mathsf{ct_2}\), the output is \(\mathsf{Enc}_{\mathbf{s}}(0)\), and when \(\mathsf{ct_1} < \mathsf{ct_2}\), the output is \(\mathsf{Enc}_{\mathbf{s}}(1)\). This proves that the algorithm correctly evaluates \(\mathsf{ct_1} < \mathsf{ct_2}\) as stated.
\end{proof}

\subsection{Proof of Lemma 1}
\label{appendix:proof_EQ}

\begin{proof}
The algorithm starts by initializing the result $r$ as a trivial $\mathsf{TLWE}$ encryption of 1, denoted as $\mathsf{Enc}_{\mathbf{s}}^{\mathsf{TLWE}}(0, \frac{1}{8})$ assuming that $\mathsf{ct_1} = \mathsf{ct_2}$ by default.

In each iteration of the loop, the $\mathsf{HomXNOR}$ gate compares corresponding bits of $\mathsf{ct_1}$ and $\mathsf{ct_2}$. If the bits are equal, $\mathsf{HomXNOR}$ outputs $\mathsf{Enc}_{\mathbf{s}}(1)$; otherwise, it outputs $\mathsf{Enc}_{\mathbf{s}}(0)$. The result $t$ from each bit comparison is then combined with the current value of $r$ using the $\mathsf{HomAND}$ operation.

The $\mathsf{HomAND}$ operation is crucial: it ensures that if any bit of $\mathsf{ct_1}$ and $\mathsf{ct_2}$ differs, the final result $r$ becomes $\mathsf{Enc}_{\mathbf{s}}(0)$, indicating inequality. Conversely, if all corresponding bits of $\mathsf{ct_1}$ and $\mathsf{ct_2}$ are identical, the final value of $r$ remains $\mathsf{Enc}_{\mathbf{s}}(1)$, confirming that $\mathsf{ct_1}$ and $\mathsf{ct_2}$ encrypt the same value.

Thus, the algorithm correctly evaluates the equality of $z_1$ and $z_2$ under homomorphic encryption, as claimed.
\end{proof}

\subsection{Proof of Theorem 2}
\label{appendix:proof_bb1}

\begin{proof}
The correctness of this algorithm can be verified by analyzing the behavior of the homomorphic comparison operations ($\mathsf{HomCompL}$ and $\mathsf{HomCompLE}$ operations). 

For the latitude, \(v_{x_{left}}\) and \(v_{x_{right}}\) will both be \(\mathsf{Enc}_{\mathbf{s}}(1)\) only when \(x_{left} \leq x < x_{right}\). Thus, the combined result $v_x = \mathsf{HomAND}(v_{x_{left}}, $ $v_{x_{right}}, \mathsf{evk})$ will be \(\mathsf{Enc}_{\mathbf{s}}(1)\) only if this condition holds. Conversely, consider a case where \(x < x_{left}\); in this scenario, \(v_{x_{left}}\) will be \(\mathsf{Enc}_{\mathbf{s}}(0)\), ensuring that \(v_x\) becomes \(\mathsf{Enc}_{\mathbf{s}}(0)\) irrespective of the value of \(v_{x_{right}}\).

Similarly, the combined result $v_y = \mathsf{HomAND}(v_{y_{left}}, v_{y_{right}}$ $, \mathsf{evk})$ will be \(\mathsf{Enc}_{\mathbf{s}}(1)\) only if \(y_{left} \leq y < y_{right}\) holds. 

The final validation \(v = \mathsf{HomAND}(v_x, v_y, \mathsf{evk})\) will output \(\mathsf{Enc}_{\mathbf{s}}(1)\) if both latitude and longitude conditions are satisfied, ensuring the coordinate \((x, y)\) is within the interval. Otherwise, it outputs \(\mathsf{Enc}_{\mathbf{s}}(0)\).
\end{proof}

\section{SECURITY PROOF}
\subsection{Proof of Corollary 2}
\label{appendix:proof_raw}

\begin{proof}
The server's goal is to retrieve the secret key $\mathbf{s}$ from the encrypted dataset $\mathsf{Enc}_{\mathsf{pk}}(\mathbf{D})$. This problem is essentially identical to the one discussed in Theorem~\ref{thm:aux_adv}, since the ciphertext $\mathsf{ct}$ in $\mathsf{Enc}_{\mathsf{pk}}(\mathbf{D})$ is of the form $\mathsf{ct} \leftarrow (\mathbf{0}, \mathsf{Ecd}(x)) + \mathsf{pk}$. Thus, the server must extract the secret key from $\mathsf{pk}$, which consists of up to $W = M  l_I d + M  l_S$ TLWE samples. Therefore, the server's advantage is bounded by:
\[
\text{Adv}_{\text{TLWE}_{1 \text{ to } W}}^{\textsf{SE}} < \frac{O(W)}{2^\lambda}
\]
where $W = M  l_I d + M  l_S$.
\end{proof}

\section{ADDITIONAL SUPPORTING INFORMATION}

In this section, we briefly explain the bootstrapping procedure in \textsf{TFHE}, which is critical for performing homomorphic gate operations. We also provide the \textsf{TFHE} parameters used during the evaluation of \textsf{VeLoPIR}, followed by a summary of the notations used throughout the paper.

% \subsection{Overview of \textsf{Bootstrap}}

% In \textsf{TFHE}, logic gates are designed to refresh or "bootstrap" the noise after each logical operation. The bootstrapping process in \textsf{TFHE} consists of three key steps: $\mathsf{BlindRotate}$, $\mathsf{SampleExtract}$, and $\mathsf{KeySwitch}$.

% \begin{itemize}
% \item $\mathsf{BlindRotate}$: This step utilizes a $\mathsf{TRLWE}$ ciphertext, which encodes a polynomial of $2N$ message values. A $\mathsf{TRGSW}$ ciphertext encrypting the secret key controls the rotation of this polynomial. The result is a rotated $\mathsf{TRLWE}$ ciphertext, where the constant term represents the original message in $\mathsf{TLWE}$ format.
% \item $\mathsf{SampleExtract}$: This extracts the constant term from the rotated $\mathsf{TRLWE}$ ciphertext, transforming it into a $\mathsf{TLWE}$ ciphertext.
% \item $\mathsf{KeySwitch}$: This procedure switches the secret key associated with the $\mathsf{TLWE}$ ciphertext from one key to another.
% \end{itemize}

% For more comprehensive details, refer to the original \textsf{TFHE} papers~\cite{tfhe-1, tfhe-2}.

\subsection{\textsf{TFHE} Parameters}
\label{appendix:tfhe_params}
Table~\ref{tab:tfhe_params} lists the cryptographic parameters for the TFHE scheme at the 128-bit security level, as used in \cite{tfhe-2}.

\begin{table}[!htb]
  \centering
  \caption{Cryptographic parameters for \textsf{TFHE} 128-bit security level}
  \label{tab:tfhe_params}
  \begin{tabularx}{\linewidth}{lcc}
    \toprule
    \multicolumn{2}{c}{\textbf{Parameter}}                                  & \textbf{Value} \\ 
    \midrule
    \textsf{TLWE} Dimension            & $n$                    & 630           \\ 
    \textsf{TRLWE} Dimension    & $k$                    & 1             \\ 
    Polynomial Size          & $N$                    & 1024          \\ 
    Key Switch Base Log      & $\log_2(\beta_{KS})$    & 2             \\ 
    Key Switch Level         & $\ell_{KS}$             & 8             \\ 
    Key Switch Standard Deviation & $\sigma_{KS}$      & $2^{-15}$     \\ 
    Bootstrapping Key Base Log & $\log_2(\beta_{BK})$  & 7             \\ 
    Bootstrapping Key Level  & $\ell_{BK}$             & 3             \\ 
    Bootstrapping Key Standard Deviation & $\sigma_{BK}$ & $2^{-25}$   \\ 
    \bottomrule
  \end{tabularx}
\end{table}

% \subsection{Summary of Notations}
% \label{appendix:notation}

% Table~\ref{tab:notations} summarizes the notations used throughout the paper. 
% % Matrices are denoted by bold uppercase letters, vectors by bold lowercase letters, and scalars by italicized letters. 
% % Bracket notation is used when referring to the binary representation of a scalar.

% \input{Tabs/notation}

\end{document}